\newtheorem{prop}{Proposition}
\newcommand{\be}{\begin{equation}}
\newcommand{\ee}{\end{equation}}
\newcommand{\bea}{\begin{eqnarray}}
\newcommand{\eea}{\end{eqnarray}}
\newcommand{\ba}{\begin{array}}
\newcommand{\ea}{\end{array}}
\newcommand{\non}{\nonumber}
\newcommand{\sss}{\scriptscriptstyle}
\title{Joint Symbol-Level Precoding and Reflecting Designs for IRS-Enhanced MU-MISO Systems
\thanks{Part of this paper has been presented in the IEEE Wireless Communications and Networking Conference (WCNC), 2020 \cite{Liu WCNC 20}.}
\thanks{Manuscript received March 18, 2020; revised June 22, 2020; accepted September 28, 2020.
This work is supported by the National Natural Science Foundation of China (Grant No. 61671101, 61971088, and 61761136019), the Natural Science Foundation of Liaoning Province (Grant No. 20180520019), the Fundamental Research Funds for the Central Universities (Grant No. DUT20GJ214), and the National Science Foundation (Grants No. CCF-2008724 and ECCS-2030039).
The associate editor coordinating the review of this manuscript and approving it for publication was Dr. Wei Zhang. (\textit{Corresponding author: Ming Li}.)}
\thanks{R. Liu and M. Li are with the School of Information and Communication Engineering, Dalian University of Technology, Dalian 116024, China (e-mail: liurang@mail.dlut.edu.cn; mli@dlut.edu.cn).}
\thanks{Q. Liu is with the School of Computer Science and Technology, Dalian University of Technology, Dalian 116024, China (e-mail: qianliu@dlut.edu.cn).}
\thanks{A. L. Swindlehurst is with the center for Pervasive Communications and Computing, University of California, Irvine, CA 92697, USA (e-mail: swindle@uci.edu).}
}
\author{Rang Liu,~\IEEEmembership{Student Member,~IEEE,}
        Ming Li,~\IEEEmembership{Senior Member,~IEEE,}
        Qian Liu,~\IEEEmembership{Member,~IEEE,}
        and A. Lee Swindlehurst,~\IEEEmembership{Fellow,~IEEE}}
\begin{document}

\maketitle

\pagestyle{empty}
\thispagestyle{empty}

\begin{abstract}
Intelligent reflecting surfaces (IRSs) have emerged as a revolutionary solution to enhance wireless communications by changing propagation environment in a cost-effective and hardware-efficient fashion.
In addition, symbol-level precoding (SLP) has attracted considerable attention recently due to its advantages in converting multiuser interference (MUI) into useful signal energy.
Therefore, it is of interest to investigate the employment of IRS in symbol-level precoding systems to exploit MUI in a more effective way by manipulating the multiuser channels.
In this paper, we focus on joint symbol-level precoding and reflecting designs in IRS-enhanced multiuser multiple-input single-output (MU-MISO) systems.
Both power minimization and quality-of-service (QoS) balancing problems are considered.
In order to solve the joint optimization problems, we develop an efficient iterative algorithm to decompose them into separate symbol-level precoding and block-level reflecting design problems.
An efficient gradient-projection-based algorithm is utilized to design the symbol-level precoding and a Riemannian conjugate gradient (RCG)-based algorithm is employed to solve the reflecting design problem.
Simulation results demonstrate the significant performance improvement introduced by the IRS and illustrate the effectiveness of our proposed algorithms.

\end{abstract}

\begin{IEEEkeywords}
Intelligent reflecting surface (IRS), symbol-level precoding (SLP), multiuser multiple-input single-output (MU-MISO) systems, manifold optimization.
\end{IEEEkeywords}

\maketitle

\section{Introduction}
\vspace{0.2 cm}

During the past decade, the applications of wireless communications have been growing rapidly and now affect nearly every aspect of our daily life.
Meanwhile, the demands of wireless communication networks for high data rate and low latency are also continuously increasing.
Various technical solutions have been proposed to meet the requirements of fifth-generation (5G) networks and beyond.
Among those technologies, massive multi-input multi-output (MIMO), millimeter wave (mmWave) communications, and ultra-dense networks are deemed as three fundamental approaches to enhance the performance along three basic dimensions: Improving spectral efficiency, utilizing more spectrum, and exploiting spatial reuse \cite{Zhang CST 2017}, \cite{Lee CM 2014}.
However, it seems that the performance improvements offered by these approaches are reaching their limits, and new technologies in different directions are needed to achieve further fundamental advances in wireless networks.
One such technology is the use of intelligent reflecting surfaces (IRS), which is a potentially revolutionary approach that provides additional degrees of freedom in system design by intelligently changing the propagation environment \cite{Liaskos CM 2018}-\cite{Wu CM 2020}.

An IRS is a planar array composed of a large number of reconfigurable passive reflecting elements, which are made up of some hardware-efficient devices, e.g., positive-intrinsic-negative (PIN) diodes and phase shifters.
Each reflecting element can independently and intelligently manipulate the amplitude and/or phase of incident electromagnetic (EM) waves in a programmable manner, which can produce a favorable propagation environment, especially when faced with blockages or severe fading.
New research on micro-electrical-mechanical systems (MEMS) and meta-materials has enabled the IRS to be configured in real-time, which is necessary for the rapidly changing wireless communication environment.
Thus, IRS have the potential for greatly expanding coverage, improving transmission quality, and enhancing security, etc., in a cost-effective and hardware-efficient fashion.
Moreover, these lightweight devices can be easily attached to the surfaces of buildings or mobile equipment, which provides mobility and portability for practical implementation \cite{Renzo 2019}, \cite{Basar Access 2019}.

Attracted by above advantages, researchers have devoted considerable attention to the development of IRS in the past year.
The applications of IRS to different wireless systems have been investigated to enhance their performance with different performance goals \cite{Zhao 2019}-\cite{Badiu WCL 2020}.
By properly designing the phase-shifting components of the IRS, the reflected signals can be coherently added to the received signals from other paths at intended receivers, which facilitates minimization of the transmit power \cite{Zhao 2019}, \cite{Wu TWC 2019}, or improving transmission performance in terms of spectral efficiency \cite{Yu ICCC 2019}, \cite{Zappone 2020}, energy reception \cite{Wu WCL 2019}, ergodic capacity \cite{Han TVT 2018}, symbol error rate \cite{Ye 2019}, channel capacity \cite{Perovic 2019}, sum-rate \cite{Huang 2018 ICASSP}-\cite{Li WCNC 2020}, and power efficiency \cite{Huang TWC 2019}, etc.
IRS-enhanced physical layer security has been investigated in \cite{Cui WCL 2019}-\cite{Yu GLOBECOM 2019}.
In order to provide preferable performance, the required number of reflecting elements has been studied in \cite{Bjornson WCL 2020}, \cite{Zhang TVT 2020}.
For practical implementation, researchers have investigated the applications of IRS with limited-resolution phase shifters \cite{Wu ITC 20}-\cite{Di TVT 2020} and phase errors \cite{Badiu WCL 2020}.
The IRS technique was also employed to realize index modulation \cite{Basar ITC 2020}, passive information transmission \cite{Yan JSAC 2020}, \cite{Liu WCSP 2019}, and holographic massive MIMO \cite{Huang 2019}.
Advanced machine learning-based algorithms have also been utilized to solve optimization problems related to IRS-assisted systems \cite{Yang 2020}, \cite{Huang 2020}.
In addition, practical channel estimation algorithms for IRS-assisted wireless communication systems have been proposed in \cite{Chen 2019}-\cite{Wei 2020}.

Precoding design is also of significant importance to facilitate information transmissions in IRS-enhanced multi-user systems.
In existing works, multi-user interference (MUI) is regarded as a harmful component and suppressed by the precoding and reflecting designs as much as possible.
However, recent research \cite{MA ITSP 2015}-\cite{Li ICST 2020} has found that MUI can often be treated as a source of useful signal energy to enhance information transmissions by means of symbol-level precoding techniques.
Specifically, symbol-level precoding utilizes transmitted symbol information and channel state information (CSI) to design the precoder, which converts harmful MUI into constructive interference to improve the symbol detection performance compared with linear  block-level precoding.

Motivated by these findings, we propose to combine symbol-level precoding and IRS in order to enjoy the advantages of both technologies.
The employment of IRS can facilitate the exploitation of MUI in symbol-level precoding by favorably manipulating the multi-user propagation channels.
However, there are some obstacles that must be tackled.
First, the symbol-level precoder changes with each transmitted symbol vector, while the IRS reflects all of them with the same phase-shift beamforming.
Thus, symbol-level constraints are difficult to implement in the reflecting design.
Since both the symbol-level precoding and reflecting need to consider all possible transmit vectors, the computational complexity will be tremendously high for large-scale systems and high-level modulation types.
To the best of our knowledge, this problem has not been studied yet, which motivates the work in this paper.

We consider the joint design of symbol-level precoding and IRS transmission design in multi-user multi-input single-output (MU-MISO) systems.
In particular, we consider a multi-antenna base station (BS) serving a number of single-antenna users with the aid of an IRS, which consists of many reflecting elements.
Our goal is to design the symbol-level precoder and IRS reflection to enhance the system performance by exploiting both MUI and modifications to the propagation environment.
The main contributions in this paper are summarized as follows:
\begin{itemize}
  \item We first focus on power minimization problem, which attempts to minimize the average transmit power as well as guarantee a certain quality-of-service (QoS) for the information transmissions. In order to solve this joint design problem, an efficient iterative algorithm is proposed to decompose the problem into separate symbol-level precoding and reflecting designs, where the gradient-projection-based and Riemannian conjugate gradient (RCG)-based algorithms are exploited.
  \item Then, we investigate the QoS balancing problem, which aims at maximizing the minimum QoS among users with a given average power constraint. The symbol-level precoding and reflecting are iteratively updated using similar gradient-projection-based and RCG-based algorithms after some transformations.
  \item Finally, we provide extensive simulation results to evaluate the performance of the proposed IRS-enhanced systems and the effectiveness of proposed algorithms. In particular, we show that applying IRS to symbol-level precoding MU-MISO systems brings remarkable performance improvements in terms of power-savings and symbol error rate (SER)-reduction, which illustrates how IRS have a positive effect on interference exploitation as well as the symbiotic benefits of using IRS and symbol-level precoding together.
\end{itemize}

The rest of this paper is organized as follows.
Section \ref{sec:system model} introduces the system model of our proposed IRS-enhanced MU-MISO system.
The considered power minimization and QoS balancing problems are investigated in Sections \ref{sec:PM} and \ref{sec:QoS}, respectively.
The algorithm initialization and complexity analysis are presented in Section \ref{sec:initialization}.
Simulation results are presented in Section \ref{sec:simulation}, and finally conclusions are provided in Section \ref{sec:conclusion}.

The following notation is used throughout this paper.
Boldface lower-case and upper-case letters indicate column vectors and matrices, respectively.
$(\cdot)^T$ and $(\cdot)^H$ denote the transpose and the transpose-conjugate operations, respectively.
$\mathbb{C}$ denotes the set of complex numbers.
$| a |$ and $\| \mathbf{a} \|$ are the magnitude of a scalar $a$ and the norm of a vector $\mathbf{a}$, respectively.
$\angle{a}$ is the angle of complex-valued $a$.
$\mathfrak{R}\{\cdot\}$ and $\mathfrak{I}\{\cdot\}$ denote the real and imaginary part of a complex number, respectively.
$\mathrm{diag}\{\mathbf{a}\}$ indicates the diagonal matrix whose diagonals are the elements of $\mathbf{a}$. $\mathbf{A} \succeq \mathbf{0}$ indicates that the matrix $\mathbf{A}$ is positive semi-definite.
Finally, we adopt the following indexing notation: $\mathbf{A}(i,j)$ denotes the element of the $i$-th row and the $j$-th column of matrix $\mathbf{A}$, and $\mathbf{a}(i)$ denotes the $i$-th element of vector $\mathbf{a}$.

\section{System Model}
\label{sec:system model}
\vspace{0.2 cm}

\begin{figure}[!t]
\center
\includegraphics[width = 0.5\textwidth]{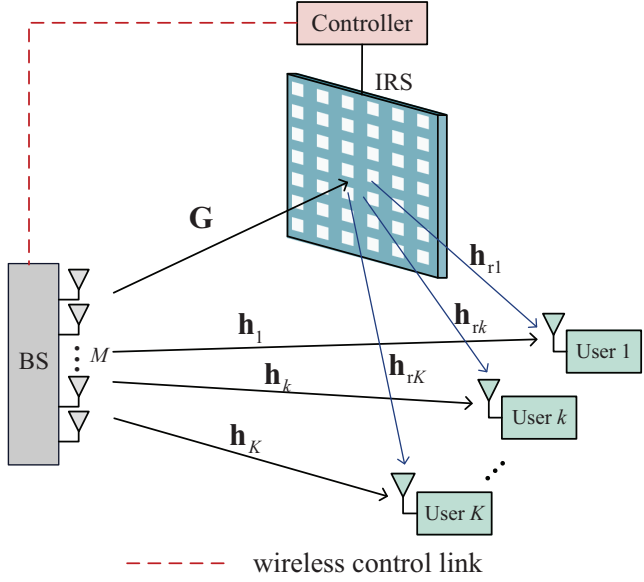}
\vspace{-0.4 cm}
\caption{The IRS-enhanced MU-MISO system.}
\label{fig:system model}
\vspace{-0.2 cm}
\end{figure}

We consider an IRS-enhanced MU-MISO system as shown in Fig. \ref{fig:system model}, where a BS equipped with $M$ antennas serves $K$ single-antenna users with the aid of an IRS.
The IRS consists of $N$ passive reflecting elements, which are implemented by phase shifters and denoted as $\bm{\theta} \triangleq \left[\theta_1,\ldots,\theta_N\right]$ that satisfy\footnote{In this preliminary work, we adopt the popular ideal phase-shift model \cite{Zhao 2019}-\cite{Zhang TVT 2020} which indicates a constant amplitude for all phase shifts.
In actual implementations, the reflection amplitudes will vary with the value of the phase shifts \cite{Abeywickrama 2020}, and these variations are associated with the difference between the frequency of the incident wave and the resonant frequency of the reflecting circuit \cite{Cai 2020}. Research assuming more realistic models will be conducted in future investigations.} $\left|\theta_n\right| = 1, \forall n$.
Each reflecting element is adaptively adjusted by the controller, which receives information about the optimized phase shifts from the BS through a dedicated control link.
We denote $\mathbf{G} \in \mathbb{C}^{N \times M}$, $\mathbf{h}_k \in \mathbb{C}^{M \times 1}$, and $\mathbf{h}_{\mathrm{r}k} \in \mathbb{C}^{N \times 1}$ as the channels from BS to IRS, from BS to the $k$-th user and from IRS to the $k$-th user, respectively.
In this paper, we assume that the CSI of all channels is known perfectly and instantaneously to the BS\footnote{In practice, channel estimation for IRS-enhanced systems is challenging since the passive reflecting elements in general do not have the ability to sense and process the received signals.
Although some initial channel estimation approaches \cite{Chen 2019}-\cite{Wei 2020} have been proposed, CSI is still difficult to obtain with limited overhead.
In order to focus on the impact of joint symbol-level precoding and reflecting problems, we assume perfect CSI in this paper.
Problems involving imperfect CSI or no CSI are left for future studies.}.

To facilitate the symbol-level precoding technique, we assume the transmitted symbols are independently selected from a $\Omega$-phase shift keying (PSK) constellation\footnote{We should emphasize that symbol-level precoding is related to the modulation type. Therefore, our designs in this paper are only capable of exploiting PSK modulation. The design for QAM modulation will be investigated in future work.} (i.e., $\Omega = 2, 4, \ldots$).
Therefore, the transmitted symbol vector $\mathbf{s}_m \triangleq \left[s_{m,1},\ldots,s_{m,K}\right]$ has $\Omega^K$ combinations, i.e., $m = 1,\ldots, \Omega^K$.
For different $\mathbf{s}_m$, the BS changes its transmitted precoder vector $\mathbf{x}_m \in \mathbb{C}^{M \times 1}$ in order to exploit the MUI.
Unlike conventional linear block-level precoding techniques, the mapping from $\mathbf{s}_m$ to $\mathbf{x}_m$ is usually nonlinear.

Through the direct and reflected paths, the compound received signal at the $k$-th user can be written as
\begin{equation}
r_{m,k} = \left(\mathbf{h}^H_k+\mathbf{h}^H_{\mathrm{r}k}\bm{\Theta}\mathbf{G}\right)\mathbf{x}_m+n_k, \forall m,
\end{equation}
where $\bm{\Theta} \triangleq \mathrm{diag}\left\{\bm{\theta}\right\}$ denotes the phase shifts of the IRS, and $n_k \sim \mathcal{CN}(0,\sigma_k^2)$ is additive white Gaussian noise (AWGN) for the $k$-th user.
Moreover, it is noteworthy that, during a coherent time slot with the same CSI, the BS changes $\mathbf{x}_m$ according to the transmitted symbols while the IRS phase shifts $\bm{\theta}$ remain unchanged.
Therefore, the reflecting design should consider all possible $\Omega^K$ precoding vectors.
Defining the precoding matrix $\mathbf{X} \triangleq \left[\mathbf{x}_1,\ldots,\mathbf{x}_{\Omega^K}\right]$, the average transmit power required to send a given symbol vector is
\begin{equation}
P_{\mathrm{ave}} = \frac{\left\|\mathbf{X}\right\|^2_{\sss{F}}}{\Omega^K}.
\end{equation}

\begin{figure}[!t]
\centering
\subfigure[An example of the constructive region.]{
\begin{minipage}{7 cm}
\centering
\includegraphics[width = 0.8\textwidth]{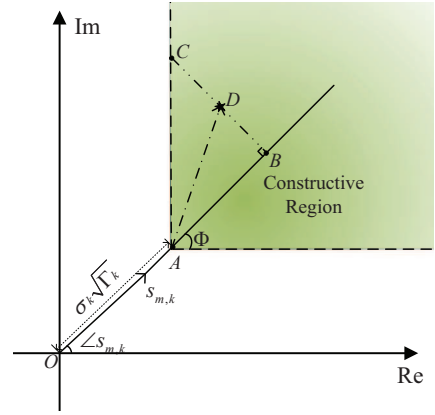}
\vspace{0.3 cm}
\label{fig:CR1}
\end{minipage}
}
\vspace{0.2 cm}
\subfigure[After rotating the diagram in Fig. \ref{fig:CR1} clockwise by $\angle s_{m,k}$ degrees.]{
\begin{minipage}{7 cm}
\centering
\includegraphics[width = 0.63\textwidth]{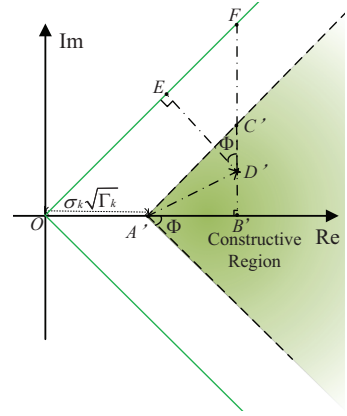}
\vspace{0.3 cm}
\label{fig:CR2}
\end{minipage}
}
\caption{Constructive region for a QPSK symbol.}
\label{fig:CR}
\vspace{-0.2 cm}
\end{figure}

With knowledge of the symbols to be transmitted, symbol-level precoding makes the MUI constructive to the information transmissions by elaborately designing $\mathbf{x}_m$.
In particular, the MUI is converted into constructive interference if it can push the received signals away from the PSK decision boundaries.
In order to illustrate the idea behind constructive interference, without loss of generality, we take quadrature phase shift keying (QPSK) as an example ($\Omega=4$).
For simplicity, we also consider $\left(\frac{1}{\sqrt{2}},\frac{1}{\sqrt{2}}j\right)$ as the symbol of interest for the $k$-th user and show the received signal in the complex plane as in Fig. \ref{fig:CR1}.
Since the decision boundaries for this symbol of interest are the positive halves of the $x$ and $y$ axes, as long as the noise-corrupted received signal expressed in (1) is in the first quadrant, the receiver can correctly detect the desired signal.
However, when we design the precoder $\mathbf{x}_m$, the noise is unknown and cannot be predicted beforehand.

Denote the received noise-free signal of the $k$-th user as \be \widetilde{r}_{m,k} = \left(\mathbf{h}^H_k+\mathbf{h}^H_{\mathrm{r}k}\bm{\Theta}\mathbf{G}\right)\mathbf{x}_m,\ee which is illustrated as point $D$ in Fig. \ref{fig:CR1}.
To reduce the impact of noise on the symbol detection, it is desirable to design $\mathbf{x}_m$ such that point $D$ is sufficiently far away from the corresponding decision boundaries to satisfy the QoS.
In order to quantify the QoS requirement, let $\Gamma_k$ be the signal-to-noise ratio (SNR) requirement for the $k$-th user.
If we ignore the impact of MUI and focus on the single-user case, the received noise-free signal should be at point $A$ to ensure that $\frac{|\widetilde{r}_{m,k}|^2}{\sigma_k^2}=\Gamma_k$, i.e., $\overrightarrow{OA}=\sigma_k\sqrt{\Gamma_k}\mathbf{s}_m(k)$.
Considering the multi-user signal as in (1), symbol-level precoding aims to design $\mathbf{x}_m$ such that point $D$ lies in the constructive (green) region, where the distance between received signal and its decision boundaries, which can be expressed as $\sigma_k\sqrt{\Gamma_k}\cos\Phi$, is guaranteed to satisfy the SNR requirement.
Therefore, symbol-level precoding can achieve better SER performance by converting the MUI into a constructive component.

In order to geometrically express this relationship, we project point $D$ on the direction of $\overrightarrow{OA}$ at point $B$, and define point $C$ to be the intersection of the extension of $\overrightarrow{BD}$ and the boundaries of the constructive region.
Then, the received noise-free signal in the green region should satisfy
\be
\left|\overrightarrow{BC}\right|-\left|\overrightarrow{BD}\right| \geq 0.
\ee
To make this expression clearer, we rotate the diagram in Fig. \ref{fig:CR1} clockwise by $\angle{s_{m,k}}$ degrees as shown in Fig. \ref{fig:CR2}.
Then, the QoS requirement is readily expressed as
\begin{equation}
\begin{aligned}
&\left[\mathfrak{R}\left\{\widetilde{r}_{m,k}e^{-j\angle{s_{m,k}}}\right\} - \sigma_k\sqrt{\Gamma_k}\right]\tan \Phi  \\
&~~~~~~~~~~~~~~~~~~~~-
\left|\mathfrak{I}\left\{\widetilde{r}_{m,k}e^{-j\angle{s_{m,k}}}\right\}\right| \geq 0,
\forall k, \forall m.
\end{aligned}
\end{equation}

In this paper, we consider two typical optimization problems for IRS-enhanced MU-MISO systems: \textit{i}) the power minimization problem, which minimizes the average transmit power while guaranteeing the QoS of the data received by the users; \textit{ii}) the QoS balancing problem, which aims to maximize the minimum QoS with a given average transmit power budget.
In the following sections, we will formulate and solve these two problems.

\section{Algorithm for Power Minimization Problem}
\label{sec:PM}
\vspace{0.2 cm}

With the previous analysis, the average power minimization problem can be written as
\begin{subequations}
\label{eq:orignal problem}
\begin{align}
  &\underset{\mathbf{X},\bm{\theta}}{\min}~~~\left\|\mathbf{X}\right\|_{\sss{F}}^2 \\
  &~\mathrm{s.t.}~~~\left[\mathfrak{R}\left\{\widetilde{r}_{m,k}e^{-j\angle{s_{m,k}}}\right\} - \sigma_k\sqrt{\Gamma_k}\right]\tan \Phi \\
  &~~~~~~~~~~~~~~~~- \left|\mathfrak{I}\left\{\widetilde{r}_{m,k}e^{-j\angle{s_{m,k}}}\right\}\right| \geq 0, \forall k, \forall m, \non\\
&~~~~~~~~~\widetilde{r}_{m,k} = \left(\mathbf{h}^H_k+\mathbf{h}^H_{\mathrm{r}k}\bm{\Theta}\mathbf{G}\right)\mathbf{x}_m, \forall k,\forall m, \\
\label{eq:problem d constraint}
&~~~~~~~~~\bm{\Theta}=\mathrm{diag}\{\bm{\theta}\}, \left|\theta_n\right|=1, \forall n,
\end{align}
\end{subequations}
which is non-convex due to the IRS constraints (\ref{eq:problem d constraint}).
Furthermore, the size of $\mathbf{X}\in \mathbb{C}^{M\times\Omega^K}$ is very large even for relatively small $K$ and $\Omega$.
Thus, it is difficult to directly solve this large-scale joint symbol-level precoding and reflecting problem.
In order to tackle this difficulty, we propose to decompose this bivariate problem into two sub-problems and implement the solutions iteratively.

\subsection{Symbol-Level Precoding Design for Power Minimization Problem}

When the IRS phase shifts $\bm{\theta}$ are fixed, the overall channel vector is determined.
We denote the combined channel vector from the BS to the $k$-th user as $\widetilde{\mathbf{h}}^H_k \triangleq \mathbf{h}^H_k + \mathbf{h}^H_{\mathrm{r}k}\bm{\Theta}\mathbf{G}, \forall k$.
Since the precoder vectors $\mathbf{x}_m, m = 1, \ldots, \Omega^K$, are independent of each other, the power minimization problem (\ref{eq:orignal problem}) can be divided into $\Omega^K$ sub-problems.
The $m$-th sub-problem for optimizing $\mathbf{x}_m$ is given by
\begin{subequations}
\label{eq:xm}
\begin{align}
  &\underset{\mathbf{x}_m}{\min}~~~\left\|\mathbf{x}_m\right\|^2 \\
  \label{eq: constraint 6b}
  &~\mathrm{s.t.}~~~\left[\mathfrak{R}\left\{\widetilde{\mathbf{h}}^H_k\mathbf{x}_me^{-j\angle{s_{m,k}}}\right\} - \sigma_k\sqrt{\Gamma_k}\right]\tan \Phi \\
  &~~~~~~~~~~~~~~~~-\left|\mathfrak{I}\left\{\widetilde{\mathbf{h}}^H_k\mathbf{x}_me^{-j\angle{s_{m,k}}}\right\}\right| \geq 0, \forall k. \non
\end{align}
\end{subequations}
This is a convex optimization problem and can be solved by standard convex tools, e.g.,  the CVX solver \cite{cvx}.
In addition, an efficient gradient projection algorithm with low complexity has also been studied in \cite{MA ITSP 2015}, which converts the problem to real-valued notation, and then derives its Lagrangian dual function to facilitate the gradient projection algorithm.
Due to space limitations, details of the algorithm to solve (\ref{eq:xm}) are omitted.

\subsection{Reflecting Design for Power Minimization Problem}

After obtaining the precoder vectors $\mathbf{x}_m, m = 1,\ldots,\Omega^K$, the objective of the original optimization problem (\ref{eq:orignal problem}) has been determined.
This means that, with given $\mathbf{x}_m$, the design of the reflection coefficients $\bm{\theta}$ becomes a feasibility-check problem whose outcome will not directly affect the power minimization objective of (\ref{eq:orignal problem}).
Therefore, for the reflecting design, we formulate another proper objective function that enhances the reduction of the transmit power for future iterations and guarantees its feasibility.

Since the power minimization problem (\ref{eq:xm}) is convex, the optimal $\mathbf{x}_m$ usually makes the left-hand side of constraint (\ref{eq: constraint 6b}) equal to a relatively small positive value, i.e., the QoS requirement is satisfied almost with equality.
In order to further reduce the transmit power in the next iteration, we propose to design the IRS phase shifts $\bm{\theta}$ using the stricter constraint (\ref{eq:stricter QoS}) in place of (\ref{eq: constraint 6b}), which can introduce an improved QoS that can provide more freedom for power minimization in the next iteration.
To this end, the IRS reflecting design problem is transformed to
\begin{subequations}
\label{eq:sum theta}
\begin{align}
  &\underset{\bm{\theta},\alpha_{m,k}}{\max}~~~
  \sum_{m=1}^{\Omega^K}\sum_{k=1}^{K}\alpha_{m,k} \\
  \label{eq:stricter QoS}
  &~\mathrm{s.t.}~~~~\left[\mathfrak{R}\left\{\widetilde{r}_{m,k}e^{-j\angle{s_{m,k}}}\right\} - \sigma_k\sqrt{\Gamma_k}\right]\tan \Phi \\
  &~~~~~~~~~~~~~~~~~~~~- \left|\mathfrak{I}\left\{\widetilde{r}_{m,k}e^{-j\angle{s_{m,k}}}\right\}\right| \geq \alpha_{m,k}, \forall k, \forall m, \non\\
&~~~~~~~~~\widetilde{r}_{m,k} = \left(\mathbf{h}^H_k+\mathbf{h}^H_{\mathrm{r}k}\bm{\Theta}\mathbf{G}\right)\mathbf{x}_m, \forall k,\forall m, \\
&~~~~~~~~~\bm{\Theta}=\mathrm{diag}\{\bm{\theta}\}, \left|\theta_n\right|=1, \forall n,
\end{align}
\end{subequations}
where the auxiliary variable $\alpha_{m,k}$ can be viewed as the residual QoS requirement.
Then, by defining
\begin{subequations}\label{eq:a b r}\begin{align}
a_{m,k} &\triangleq \mathbf{h}^H_k\mathbf{x}_me^{-j\angle s_{m,k}}, \forall k, \forall m, \\
\mathbf{b}_{m,k} &\triangleq \mathrm{diag}\left\{\mathbf{h}^H_{\mathrm{r}k}\right\}\mathbf{G}\mathbf{x}_me^{-j\angle s_{m,k}}, \forall k, \forall m, \\
\widehat{r}_{m,k} &\triangleq a_{m,k}+\bm{\theta}^H\mathbf{b}_{m,k},
\end{align}
\end{subequations}
problem (\ref{eq:sum theta}) can be reformulated concisely as
\begin{subequations}
\label{eq:min sum neat}
\begin{align}
\label{eq:objective min sum}
&\underset{\bm{\theta}}{\min}~~~\sum_{m=1}^{\Omega^K}\sum_{k=1}^{K}\left|\mathfrak{I}\left\{\widehat{r}_{m,k}\right\}\right|
-\left[\mathfrak{R}\left\{\widehat{r}_{m,k}\right\}-\sigma_k\sqrt{\Gamma_k}\right]\tan\Phi\\
\label{eq:unit modulus min sum}
&~\mathrm{s.t.}~~~\left|\theta_n\right| = 1, \forall n.
\end{align}
\end{subequations}
Unfortunately, the absolute values in the objective (\ref{eq:objective min sum}) are non-differentiable, which hinders the algorithm development.
In addition, the unit modulus constraints for the IRS phase shifts in (\ref{eq:unit modulus min sum}) introduce another difficulty due to their non-convexity.
Thus, we turn to solving above two problems by the log-sum-exp and manifold-based algorithms in the followings.

In order to handle the absolute value terms, we approximate the objective (\ref{eq:objective min sum}) by a differentiable function.
It can be observed that (\ref{eq:objective min sum}) can be concisely expressed as $|a|+b$, where $a$ and $b$ are scalars.
Then, the non-differentiable absolute value function can be replaced by
\begin{equation}
\label{eq:absolute}
\left|a\right|+b = \mathrm{max}\left\{a+b,-a+b\right\}.
\end{equation}
We then exploit the well-known log-sum-exp method and obtain
\begin{equation}
\begin{aligned}
&\mathrm{max}\left\{a+b,-a+b\right\} \\
&~~~~~~~~~~~~\approx \varepsilon \log \left[\exp\left(\frac{a+b}{\varepsilon}\right)+\exp\left(\frac{-a+b}{\varepsilon}\right)\right],
\end{aligned}
\end{equation}
where $\varepsilon$ is a relatively small positive number to maintain the approximation.
Thus, the optimization problem (\ref{eq:min sum neat}) can be reformulated as
\begin{subequations}
\label{eq:min sum smooth}
\begin{align}
\label{eq:obj min sum smooth}
&\underset{\bm{\theta}}{\min}~~~g \triangleq\sum_{i=1}^{K\Omega^K} \varepsilon\log\left[\exp\left(\frac{f_{2i-1}}{\varepsilon}\right)+\exp\left(\frac{f_{2i}}{\varepsilon}\right)\right] \\
\label{eq:unit min sum smooth}
&~\mathrm{s.t.}~~\left|\theta_n\right| = 1, \forall n.
\end{align}
\end{subequations}
For simplicity, in (\ref{eq:min sum smooth}) we define $f_i, i = 1, 2, \ldots, K\Omega^K$, as
\begin{subequations}
\label{eq: f2i}
\begin{align}
\centering
f_{2i-1} &\triangleq \mathfrak{I}\left\{\widehat{r}_{m,k}\right\} -  \left[\mathfrak{R}\left\{\widehat{r}_{m,k}\right\}-\sigma_k\sqrt{\Gamma_k}\right]\tan\Phi \non \\
&= \mathfrak{R}\{\bm{\theta}^H\}\mathbf{a}_{2i-1} + \mathfrak{I}\{\bm{\theta}^H\}\mathbf{b}_{2i-1}+c_{2i-1}, \\
f_{2i}~~~&\triangleq -\mathfrak{I}\left\{\widehat{r}_{m,k}\right\}-\left[\mathfrak{R}\left\{\widehat{r}_{m,k}\right\}-\sigma_k\sqrt{\Gamma_k}\right]\tan\Phi \non \\
&= \mathfrak{R}\{\bm{\theta}^H\}\mathbf{a}_{2i} + \mathfrak{I}\{\bm{\theta}^H\}\mathbf{b}_{2i}+c_{2i},
\end{align}
\end{subequations}
where $i = K(m-1)+k$ and
\begin{subequations}
\label{eq:abc}
\begin{align}
\label{eq:a 2i-1}
\mathbf{a}_{2i-1} &\triangleq \mathfrak{I}\{\mathbf{b}_{m,k}\}-\mathfrak{R}\{\mathbf{b}_{m,k}\}\tan\Phi,\\
\label{eq:b 2i-1}
\mathbf{b}_{2i-1} &\triangleq \mathfrak{R}\{\mathbf{b}_{m,k}\}+\mathfrak{I}\{\mathbf{b}_{m,k}\}\tan\Phi,\\
c_{2i-1} &\triangleq \mathfrak{I}\{a_{m,k}\}-\mathfrak{R}\{a_{m,k}\}\tan\Phi+\sigma_k\sqrt{\Gamma_k}\tan\Phi,\\
\label{eq:a 2i}
\mathbf{a}_{2i}~~~ &\triangleq -\mathfrak{I}\{\mathbf{b}_{m,k}\}-\mathfrak{R}\{\mathbf{b}_{m,k}\}\tan\Phi, \\
\label{eq:b 2i}
\mathbf{b}_{2i}~~~ &\triangleq -\mathfrak{R}\{\mathbf{b}_{m,k}\}+\mathfrak{I}\{\mathbf{b}_{m,k}\}\tan\Phi, \\
c_{2i}~~~ &\triangleq -\mathfrak{I}\{a_{m,k}\}-\mathfrak{R}\{a_{m,k}\}\tan\Phi+\sigma_k\sqrt{\Gamma_k}\tan\Phi.
\end{align}
\end{subequations}

While the objective of (\ref{eq:min sum smooth}) is smooth and differentiable, the unit modulus constraints (\ref{eq:unit min sum smooth}) are non-convex, which still makes the problem difficult to solve.
Two popular methods for handling this type of constraint include non-convex relaxation and alternating minimization.
However, the non-convex relaxation method always suffers a performance loss since the solution is based on a relaxation of the original problem.
On the other hand, the alternating minimization method may have slow convergence due to the large number of variables involved.
In order to deal with these difficulties, we adopt the Riemannian-manifold-based algorithm, which can achieve a locally optimal solution of the original optimization problem with very fast convergence \cite{Boumal manopt 14}.

Before developing the algorithm, we need to introduce some related concepts.
On a manifold, each point has a neighborhood homeomorphic to Euclidean space, and the directions in which the point can move are its tangent vectors, which compose the tangent space.
Similar to the Euclidean space, the tangent space has one tangent vector in the direction where the objective function decreases fastest, which is referred to as the Riemannian gradient.
Furthermore, the Riemannian gradient is the orthogonal projection of the Euclidean gradient onto its corresponding tangent space.
Therefore, efficient algorithms used in Euclidean space, e.g., the conjugate gradient (CG) and the trust-region methods, are suitable on the Riemannian manifold after several operations.
In the following, we apply the conjugate gradient algorithm on the Riemannian manifold to solve our problem.

Denoting $\widetilde{\bm{\Theta}} \triangleq \left[\mathfrak{R}\{\bm{\theta}\},\mathfrak{I}\{\bm{\theta}\}\right]^T$, the unit modulus constraints (\ref{eq:unit min sum smooth}) form a $2N$-dimensional smooth Riemannian manifold
\begin{equation}
\label{eq:search space}
\mathcal{M} = \left\{\widetilde{\bm{\Theta}} \in \mathbb{R}^{2 \times N}:[\widetilde{\bm{\Theta}}(:,n)]^T\widetilde{\bm{\Theta}}(:,n) = 1, \forall n\right\},
\end{equation}
whose tangent space is
\begin{equation}
T_{\widetilde{\bm{\Theta}}}\mathcal{M} = \left\{\mathbf{P} \in \mathbb{C}^{2\times N}:[\widetilde{\bm{\Theta}}(:,n)]^T\mathbf{P}(:,n) = 0, \forall n\right\}.
\end{equation}
In order to facilitate the conjugate gradient algorithm, the Euclidean gradient is required to determine the corresponding Riemannian gradient.
Let $\widetilde{\bm{\theta}}_n$ be the $n$-th column of $\widetilde{\bm{\Theta}}$, so that the Euclidean gradient of $g(\widetilde{\bm{\Theta}})$ can be expressed as
\begin{equation}
\begin{aligned}
\nabla_{\widetilde{\bm{\Theta}}} g = \left[\frac{\partial g}{\partial\widetilde{\bm{\theta}}_1},\ldots,\frac{\partial g}{\partial\widetilde{\bm{\theta}}_N}\right].
\end{aligned}
\end{equation}
Following the chain rule, the $n$-th column of the Euclidean gradient is calculated as
\begin{equation}
\label{eq:chain rule}
\begin{aligned}
\frac{\partial g}{\partial\widetilde{\bm{\theta}}_n} =\frac{\partial \mathfrak{R}\{\bm{\theta}^H\}}{\partial\widetilde{\bm{\theta}}_n}\left(\frac{\partial g}{\partial\mathfrak{R}\{\bm{\theta}^H\}}\right)^T +\frac{\partial \mathfrak{I}\{\bm{\theta}^H\}}{\partial\widetilde{\bm{\theta}}_n}\left(\frac{\partial g}{\partial\mathfrak{I}\{\bm{\theta}^H\}}\right)^T.
\end{aligned}
\end{equation}

According to the previous definition, it is obvious that
\begin{equation}
\label{eq: gradient e}
\begin{aligned}
\frac{\partial \mathfrak{R}\{\bm{\theta}^H\}}{\partial\widetilde{\bm{\theta}}_n} = \left[\mathbf{e}_n,\mathbf{0}\right]^T, \frac{\partial \mathfrak{I}\{\bm{\theta}^H\}}{\partial\widetilde{\bm{\theta}}_n} = \left[\mathbf{0},\mathbf{e}_n\right]^T,
\end{aligned}
\end{equation}
where $\mathbf{e}_n \in \mathbb{R}^{N \times 1}$ is defined by $\mathbf{e}_n(n) = 1, \mathbf{e}_n(i) = 0, \forall i \neq n$.
Based on (\ref{eq:obj min sum smooth}) and (\ref{eq: f2i}), we have
\begin{subequations}\label{eq:drivate ri}\begin{align}
&\frac{\partial g}{\partial\mathfrak{R}\{\bm{\theta}^H\}} = \sum_{i = 1}^{K\Omega^K}\frac{\exp(f_{2i-1}/\varepsilon)\mathbf{a}_{2i-1}^T+\exp(f_{2i}/\varepsilon)\mathbf{a}_{2i}^T}
{\exp(f_{2i-1}/\varepsilon)+\exp(f_{2i}/\varepsilon)}, \\
&\frac{\partial g}{\partial\mathfrak{I}\{\bm{\theta}^H\}} = \sum_{i = 1}^{K\Omega^K}\frac{\exp(f_{2i-1}/\varepsilon)\mathbf{b}_{2i-1}^T+\exp(f_{2i}/\varepsilon)\mathbf{b}_{2i}^T}
{\exp(f_{2i-1}/\varepsilon)+\exp(f_{2i}/\varepsilon)}.
\end{align}\end{subequations}
Then, the Euclidean gradient can be readily calculated by substituting (\ref{eq: gradient e}) and (\ref{eq:drivate ri}) into (\ref{eq:chain rule}).
The Riemannian gradient is thus given by
\begin{equation}
\label{eq:grad}
\mathrm{grad}_{\widetilde{\bm{\Theta}}}g = \mathcal{P}_{\widetilde{\bm{\Theta}}}\left(\nabla_{\widetilde{\bm{\Theta}}} g\right) = \nabla_{\widetilde{\bm{\Theta}}} g - \widetilde{\bm{\Theta}} \mathrm{diag}\left\{\widetilde{\bm{\Theta}}^T\nabla_{\widetilde{\bm{\Theta}}} g \right\},
\end{equation}
where $\mathcal{P}_{\widetilde{\bm{\Theta}}}\left(\cdot\right)$ denotes the projection onto the tangent space.

With the Riemannian gradient, the conjugate gradient algorithm can be employed on the Rienmannian space, and is referred to as the Riemannian conjugate gradient (RCG) algorithm.
Considering the characteristics of the Riemannian space, this line search method works in a different way than the standard CG algorithm.
In the $p$-th iteration of RCG, the search direction $\mathbf{d}_p$ is determined by the Riemannian gradient $\mathrm{grad}_{\widetilde{\bm{\Theta}}}g(\widetilde{\bm{\Theta}}_p)$ and the $(p-1)$-th search direction $\mathbf{d}_{p-1}$.
Since these two vectors lie in different tangent spaces, they cannot be directly combined.
Thus, the Riemannian transport operation is needed to map $\mathbf{d}_{p-1}$ into the tangent space of $\mathrm{grad}_{\widetilde{\bm{\Theta}}}g(\widetilde{\bm{\Theta}}_p)$.
Then, the search direction $\mathbf{d}_p$ is given by
\begin{equation}
\label{eq:dp}
\mathbf{d}_p = -\mathrm{grad}_{\widetilde{\bm{\Theta}}}g(\widetilde{\bm{\Theta}}_p) + \beta_p \mathbf{d}_{p-1}^\mathrm{t},
\end{equation}
where $\beta_p$ is the Polak-Ribiere parameter \cite{CG} and the superscript ``t'' indicates the Riemannian transport operation.
The step size $\alpha_p$ is chosen by the Armijo backtracking line search method \cite{CG} and the $p$-th update is thus expressed as
\begin{equation}
\label{eq:retr}
\widetilde{\bm{\Theta}}_p = \mathrm{Retr}_{\widetilde{\bm{\Theta}}}\left(\widetilde{\bm{\Theta}}_{p-1}+\alpha_p\mathbf{d}_p\right),
\end{equation}
where $\mathrm{Retr}_{\widetilde{\bm{\Theta}}}\left(\cdot\right)$ indicates the retraction operation, which maps the point on the tangent space to the manifold.

After obtaining $\widetilde{\bm{\Theta}}^*$, the optimal $\bm{\theta}^*$ can be constructed as
\begin{equation}
\label{eq:theta o}
\bm{\theta}^* = [\widetilde{\bm{\Theta}}^*(1,:)]^T+j[\widetilde{\bm{\Theta}}^*(2,:)]^T.
\end{equation}
The RCG algorithm to obtain $\bm{\theta}^*$ is summarized in Algorithm \ref{alg:RCG}, where $N_\mathrm{max}$ is the maximum number of iterations and $\delta_\mathrm{th}$ is the threshold to judge convergence.

\begin{algorithm}[!t]
\caption{RCG-based IRS Reflecting Design}
\label{alg:RCG}
    \begin{algorithmic}[1]
    \REQUIRE $g(\widetilde{\bm{\Theta}})$, $\widetilde{\bm{\Theta}}_0 \in \mathcal{M}$, $N_\mathrm{max}$, $\delta_{\mathrm{th}}$.
    \ENSURE $\bm{\theta}^*$.
        \STATE {Initialize $p = 0$, $\delta=\infty$, $\mathbf{d}_0 = -\mathrm{grad}_{\widetilde{\bm{\Theta}}}g(\widetilde{\bm{\Theta}}_0)$.}
        \WHILE {$p \leq N_{\mathrm{max}}$ and $\delta \geq \delta_{\mathrm{th}}$ }
            \STATE{Calculate Riemannian gradient $\mathrm{grad}_{\widetilde{\bm{\Theta}}}g(\widetilde{\bm{\Theta}}_{p})$ by (\ref{eq:grad}).}
            \STATE{Choose Polak-Ribiere parameter $\beta_{p}$ \cite{CG}.}
            \STATE{Calculate search direction $\mathbf{d}_{p}$ by (\ref{eq:dp}).}
            \STATE{Calculate Armijo backtracking line search step size $\alpha_p$ \cite{CG}.}
            \STATE{Obtain $\widetilde{\bm{\Theta}}_{p}$ by (\ref{eq:retr}).}
            \STATE{$\delta = \left\|\mathrm{grad}_{\widetilde{\bm{\Theta}}}g(\widetilde{\bm{\Theta}}_{p})\right\|^2$}
            \STATE {$p = p + 1$.}
        \ENDWHILE
        \STATE{$\widetilde{\bm{\Theta}}^* = \widetilde{\bm{\Theta}}_{p}$.}
        \STATE{Construct $\bm{\theta}^*$ by (\ref{eq:theta o})}
    \end{algorithmic}
\end{algorithm}
In realistic IRS implementations, low-resolution digital phase shifters are more hardware-efficient and practical.
A discrete phase-shift $\bm{\theta}_\mathrm{d}$ for the IRS using $B$ bits of resolution can thus be calculated by direct quantization, i.e.,
\begin{equation}
\label{eq:theta B}
\bm{\theta}_\mathrm{d}(n) = \mathrm{round}\left\{\frac{\bm{\theta}^*(n)}{2 \pi /2^B}\right\}\times \frac{2\pi}{2^B}, \forall n,
\end{equation}
where $\mathrm{round}\{\cdot\}$ indicates rounding to the nearest integer.

Now, with the previous developments, the joint symbol-level precoding and reflecting design for the power minimization problem is straightforward.
Given an initial value $\bm{\theta}_0$, the symbol-level precoders $\mathbf{x}_m, m = 1,\ldots, \Omega^K$, and the IRS phase shifts $\bm{\theta}$ are iteratively updated by solving (\ref{eq:xm}) and (\ref{eq:sum theta}) until convergence is found.
This joint symbol-level precoding and reflecting algorithm for the power minimization problem is summarized in Algorithm \ref{alg:2}.
Selection of an initial $\bm{\theta}_0$ will be addressed in Section \ref{sec:initialization}.
\begin{algorithm}[!t]
\caption{Joint Symbol-Level Precoding and Reflecting Design for the Power Minimization Problem}
\label{alg:2}
    \begin{algorithmic}[1]
    \REQUIRE $\mathbf{h}_k$, $\mathbf{h}_{\mathrm{r}k}$, $\mathbf{G}$, $\Omega$, $\sigma_k$, $\Gamma_k$, $B$, $N_{\mathrm{max}}$, $\delta_{\mathrm{th}}$.
    \ENSURE $\bm{\theta}^*$, $\mathbf{X}^*$.
        \STATE {Initialize $\bm{\theta}_0$ by solving (\ref{eq:initial theta 2}), $iter = 0$, $\delta=\infty$, $p_\mathrm{t} = 0$.}
        \WHILE {$iter \leq N_{\mathrm{max}}$ and $\delta \geq \delta_{\mathrm{th}}$ }
            \STATE{$p_\mathrm{pre} = p_\mathrm{t}$.}
            \STATE{Calculate precoder $\mathbf{x}_m, m = 1,\ldots, \Omega^K$, by (\ref{eq:xm}).}
            \STATE{Obtain infinite resolution IRS phase shifts $\bm{\theta}$ using Algorithm \ref{alg:RCG}.}
            \STATE{Calculate $\bm{\theta}_\mathrm{d}$ by (\ref{eq:theta B}) for the low-resolution IRS phase shifts cases.}
            \STATE{$p_\mathrm{t} = \left\|\mathbf{X}\right\|_{\sss{F}}^2$.}
            \STATE{$\delta = \left|\frac{p_\mathrm{t}-p_\mathrm{re}}{p_\mathrm{re}}\right|$.}
            \STATE {$iter = iter + 1$.}
        \ENDWHILE
        \STATE{$\bm{\theta}^* = \bm{\theta}$ or $\bm{\theta}_\mathrm{d}$, $\mathbf{X}^* = \mathbf{X}$.}
    \end{algorithmic}
\end{algorithm}

\section{Algorithm for QoS Balancing Problem}
\label{sec:QoS}
\vspace{0.2 cm}

In this section, we first formulate the QoS balancing problem for the considered IRS-enhanced MU-MISO system.
Then, a similar algorithm is proposed to iteratively solve the symbol-level precoding and reflecting design problems.

As discussed in Section \ref{sec:system model}, the distance between the received noise-free signal and its decision boundaries essentially determines symbol detection performance;
larger distances provide stronger robustness against noise, and thus a lower SER.
Therefore, we use this distance as the QoS metric and aim at maximizing the minimum QoS among the users with a given power budget.
From Fig. \ref{fig:CR}, we observe that the distance between point $D$ and its decision boundaries (i.e., the positive halves of the $x$ and $y$ axes in this case) can be expressed as
\be\begin{aligned}
\label{eq:distance}
&\left|\overrightarrow{D'E}\right|= \left|\overrightarrow{D'F}\right|\cos\Phi
= \left(\left|\overrightarrow{B'F}\right|-\left|\overrightarrow{B'D'}\right|\right)\cos\Phi \\
&= \left[\mathfrak{R}\{\widetilde{r}_{m,k}e^{-j\angle{s_{m,k}}}\}\tan \Phi - \left|\mathfrak{I}\{\widetilde{r}_{m,k}e^{-j\angle{s_{m,k}}}\}\right|\right]\cos \Phi.
\end{aligned}\ee
Thus, after ignoring the constant term $\cos \Phi$, the QoS balancing problem can be formulated as
\begin{subequations}
\label{eq:SINR balancing}
\begin{align}
\underset{\mathbf{X},\bm{\theta}}{\max}~~&\underset{m,k}{\min}~~~\mathfrak{R}\left\{\widetilde{r}_{m,k}
  e^{-j\angle{s_{m,k}}}\right\}
   \tan \Phi- \left|\mathfrak{I}\left\{\widetilde{r}_{m,k}e^{-j\angle{s_{m,k}}}\right\}\right|\\
  &~\mathrm{s.t.}~~~\widetilde{r}_{m,k} = \left(\mathbf{h}^H_k+\mathbf{h}^H_{\mathrm{r}k}\bm{\Theta}\mathbf{G}\right)\mathbf{x}_m, \forall k,\forall m, \\
\label{eq:SINR theta}
&~~~~~~~~\bm{\Theta}=\mathrm{diag}\{\bm{\theta}\}, \left|\theta_n\right|=1, \forall n,\\
&~~~~~~~~\left\|\mathbf{X}\right\|^2 \leq P\Omega^K,
\end{align}
\end{subequations}
where $P$ is the preset average transmit power budget.
As before, we propose to decompose this bivariate problem into separate symbol-level precoding design and the reflecting design problems, and solve them iteratively.

\subsection{Symbol-Level Precoding Design for QoS Balancing Problem}

With given IRS phase shifts $\bm{\theta}$, the combined channel vector from the BS to the $k$-th user is $\widetilde{\mathbf{h}}^H_k \triangleq \mathbf{h}^H_k + \mathbf{h}^H_{\mathrm{r}k}\bm{\Theta}\mathbf{G}$.
Then, the QoS balancing problem for designing the precoder $\mathbf{X}$ can be rewritten as
\begin{subequations}
\label{eq:SINR balancing precoding}
\begin{align}
  &\underset{\mathbf{X},t}{\max}~~~t \\
  &~\mathrm{s.t.}~~~\mathfrak{R}\{\widetilde{\mathbf{h}}^H_k
  \mathbf{x}_me^{-j\angle{s_{m,k}}}\}\tan \Phi\\
  &~~~~~~~~~~~~~~~~~~~~~-\left|\mathfrak{I}\{\widetilde{\mathbf{h}}^H_k\mathbf{x}_me^{-j\angle{s_{m,k}}}\}\right| \geq t, \forall k, \forall m,\non \\
  &~~~~~~~\left\|\mathbf{X}\right\|^2 \leq P\Omega^K,
\end{align}
\end{subequations}
which is a convex problem and can be solved by standard convex optimization tools, e.g.,  CVX.
However, since the variable to be optimized $\mathbf{X}$ has a large dimension of $M\Omega^K$, the complexity is unaffordable.
In order to deal with this difficulty, we decompose this problem into $\Omega^K$ sub-problems, where $\mathbf{x}_m, \forall m$, is individually designed.
To facilitate the algorithm development, we propose the following proposition.
\begin{prop}
Let $\mathbf{x}_1^*, \ldots, \mathbf{x}_{\Omega^K}^*$ be the optimal solution of the QoS balancing problem (\ref{eq:SINR balancing precoding}).
Let $\mathbf{x}_1^\star,\ldots,\mathbf{x}_{\Omega^K}^\star$ be the optimal solution of the power minimization problem (\ref{eq:xm}), where the QoS requirement for all users equals $t_0=\sigma_k\sqrt{\Gamma_k}\tan\Phi, \forall k$.
Then, $\mathbf{x}_m^*$ is a scaled version of $\mathbf{x}_m^\star$, i.e., $\mathbf{x}_m^* = \frac{\sqrt{P_m}\mathbf{x}_m^\star}{\left\|\mathbf{x}_m^\star\right\|}$, where $P_m \geq 0$ is the transmit power allocated to the $m$-th precoder, $\sum_{m=1}^{\Omega^K}P_m = P\Omega^K$.
Furthermore, the minimum QoS that $\mathbf{x}_m^*$ can achieve is $\frac{\sqrt{P_m}t_0}{\left\|\mathbf{x}_m^\star\right\|}$.
\end{prop}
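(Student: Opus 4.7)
The plan is to exploit the fact that the per-user distance metric appearing in problem~(\ref{eq:SINR balancing precoding}) is positively homogeneous of degree one in $\mathbf{x}_m$, so that both the per-$m$ QoS balancing sub-problem (at a fixed per-symbol power budget) and the per-$m$ power minimization problem~(\ref{eq:xm}) share the same optimal beam direction and differ only in an overall scaling. Define $Q_{m,k}(\mathbf{x}_m)\triangleq \mathfrak{R}\{\widetilde{\mathbf{h}}^H_k\mathbf{x}_m e^{-j\angle{s_{m,k}}}\}\tan\Phi - |\mathfrak{I}\{\widetilde{\mathbf{h}}^H_k\mathbf{x}_m e^{-j\angle{s_{m,k}}}\}|$; because $\mathfrak{R}\{\cdot\}$ and $|\mathfrak{I}\{\cdot\}|$ are positively homogeneous in their complex argument, $Q_{m,k}(\alpha\mathbf{x}_m)=\alpha Q_{m,k}(\mathbf{x}_m)$ for any $\alpha\geq 0$.

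First I would decouple (\ref{eq:SINR balancing precoding}) across $m$: set $P_m\triangleq\|\mathbf{x}_m^*\|^2$, so $\sum_m P_m=P\Omega^K$ at optimum, and a standard swap argument shows that $\mathbf{x}_m^*$ may be taken to solve
\[
\max_{\mathbf{x}_m:~\|\mathbf{x}_m\|^2\leq P_m}~ \min_k~ Q_{m,k}(\mathbf{x}_m).
\]
Writing $\mathbf{x}_m=\alpha\mathbf{u}$ with $\|\mathbf{u}\|=1$ and invoking homogeneity, this sub-problem collapses to $\sqrt{P_m}\max_{\|\mathbf{u}\|=1}\min_k Q_{m,k}(\mathbf{u})$, so full power $\alpha^*=\sqrt{P_m}$ is used and the optimal direction $\mathbf{u}_m^\diamond$ is the max-min unit vector, \emph{independent} of $P_m$.

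Second, I would apply the same decomposition to (\ref{eq:xm}) at QoS level $t_0$: the constraint $Q_{m,k}(\alpha\mathbf{u})\geq t_0$ for every $k$ reduces to $\alpha\geq t_0/\min_k Q_{m,k}(\mathbf{u})$, so the minimum feasible $\alpha$ for a given direction $\mathbf{u}$ is $t_0/\min_k Q_{m,k}(\mathbf{u})$, and minimizing this over $\mathbf{u}$ is exactly maximizing $\min_k Q_{m,k}(\mathbf{u})$, which returns the same direction $\mathbf{u}_m^\diamond$. Therefore $\mathbf{x}_m^\star=\|\mathbf{x}_m^\star\|\mathbf{u}_m^\diamond$ and $\min_k Q_{m,k}(\mathbf{u}_m^\diamond)=t_0/\|\mathbf{x}_m^\star\|$. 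Combining the two decompositions gives $\mathbf{x}_m^*=\sqrt{P_m}\mathbf{u}_m^\diamond=\sqrt{P_m}\mathbf{x}_m^\star/\|\mathbf{x}_m^\star\|$, while the attained minimum-over-users QoS is $\sqrt{P_m}\min_k Q_{m,k}(\mathbf{u}_m^\diamond)=\sqrt{P_m}\,t_0/\|\mathbf{x}_m^\star\|$, matching the claim.

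The main technical obstacle is the swap argument in the first step: substituting $\mathbf{x}_m^*$ by a sub-problem optimizer with the same budget $P_m$ can only weakly improve the overall $\min_{m,k}$, so one has to argue carefully that any strict improvement would contradict global optimality of $t^*$ in (\ref{eq:SINR balancing precoding}), guaranteeing that an optimal $\mathbf{X}^*$ with the claimed per-$m$ structure does exist. Once this is in place, the remaining steps are routine manipulations flowing from positive homogeneity and the duality between minimum-power and max-min-QoS formulations.
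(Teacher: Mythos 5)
Your proposal is correct and follows essentially the same route as the paper's proof: fix the optimal per-symbol power split $P_m=\|\mathbf{x}_m^*\|^2$ to decouple across $m$, then use the positive homogeneity of the QoS metric in $\mathbf{x}_m$ to identify the max-min sub-problem and the power-minimization problem as sharing one optimal direction, so the two solutions differ only by the scaling $\sqrt{P_m}/\|\mathbf{x}_m^\star\|$. Your explicit direction--magnitude factorization $\mathbf{x}_m=\alpha\mathbf{u}$ is just a cleaner packaging of the paper's contradiction-plus-rescaling argument, and you rightly flag the one point both treatments gloss over, namely that the per-$m$ structure is guaranteed only for \emph{some} optimal $\mathbf{X}^*$ unless uniqueness is assumed.
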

\begin{proof}
See Appendix A.
\end{proof}
Proposition 1 indicates that we can first find the precoder $\mathbf{x}_m^\star$ by individually solving the power minimization problem (\ref{eq:xm}) with a given QoS requirement $t_0$, and then scaling $\mathbf{x}_m^\star$ appropriately to obtain the optimal $\mathbf{x}_m^*$ by finding the optimal power allocation $P_m$.
With given $\mathbf{x}_m^\star$, the power allocation problem to optimize QoS balancing can be formulated as
\begin{subequations}
\label{eq:power allocation}
\begin{align}
&\underset{P_m,\forall m}{\max}~~~t\\
&~\mathrm{s.t.}~~~~~t \leq \frac{\sqrt{P_m}t_0}{\left\|\mathbf{x}_m^\star\right\|}, \forall m, \\
&~~~~~~~~~\sum_{m=1}^{\Omega^K}P_m \leq P\Omega^K.
\end{align}
\end{subequations}
While (\ref{eq:power allocation}) is convex and can be solved by CVX, we attempt to find a more efficient solution to reduce the complexity.
Denoting $\mathbf{p} \triangleq \left[\sqrt{P_1},\ldots,\sqrt{P_{\Omega^K}}\right]^T$, the power allocation problem  (\ref{eq:power allocation}) can be rewritten as
\begin{subequations}
\label{eq:power allocate}
\begin{align}
&\underset{\mathbf{p},t}{\max}~~~t\\
&~\mathrm{s.t.}~~~t \leq \mathbf{e}_m^T\mathbf{p}, \forall m, \\
&~~~~~~~\left\|\mathbf{p}\right\|^2 \leq P\Omega^K,
\end{align}
\end{subequations}
where $\mathbf{e}_m$ is a vector of all zeros except the $m$-th element which is $ \frac{t_0}{\left\|\mathbf{x}_m^\star\right\|}$.
Motivated by Proposition~1, we first find $\mathbf{p}^\star$, which is the optimal solution for the following power minimization problem with an arbitrary given $t_0' \geq 0$:
\begin{subequations}
\label{eq:min p}
\begin{align}
&\underset{\mathbf{p}}{\min}~~~\left\|\mathbf{p}\right\|^2\\
&~\mathrm{s.t.}~~~~\mathbf{e}_m^T\mathbf{p} \geq t_0', \forall m.
\end{align}
\end{subequations}
This problem can be efficiently solved using the same method as in problem (\ref{eq:xm}), based on the Lagrangian dual problem and exploiting the gradient projection algorithm.
Then, the optimal $\mathbf{p}^*$ for (\ref{eq:power allocate}) can be obtained by $\mathbf{p}^* = \frac{\sqrt{P\Omega^K}\mathbf{p}^\star}{\left\|\mathbf{p}^\star\right\|}$.

With the above analysis, the symbol-level precoding algorithm for the QoS balancing problem can be summarized as:
\textit{i}) obtain the precoder $\mathbf{x}_m^\star$ by solving the power minimization problem (\ref{eq:xm}) with a certain QoS requirement $t_0 \geq 0$; \textit{ii}) solving the power allocation problem (\ref{eq:power allocation}) to obtain $P_m, \forall m$; \textit{iii}) scaling $\mathbf{x}_m^\star$ to obtain the optimal solution of (\ref{eq:SINR balancing precoding}) as $\mathbf{x}_m^* = \frac{\sqrt{P_m}\mathbf{x}_m^\star}{\left\|\mathbf{x}_m^\star\right\|}$.

\subsection{Reflecting Design for QoS Balancing Problem}

With fixed precoders $\mathbf{x}_1,\ldots,\mathbf{x}_{\Omega^K}$, the reflecting design problem is given by
\begin{subequations}
\label{eq:SINR balancing theta}
\begin{align}
\underset{\bm{\theta}}{\max}~~ &\underset{m,k}{\min}~~~\mathfrak{R}\{\widetilde{r}_{m,k}
  e^{-j\angle{s_{m,k}}}\}\tan \Phi- \left|\mathfrak{I}\{\widetilde{r}_{m,k}e^{-j\angle{s_{m,k}}}\}\right| \\
  &~\mathrm{s.t.}~~~\widetilde{r}_{m,k} = \left(\mathbf{h}^H_k+\mathbf{h}^H_{\mathrm{r}k}\bm{\Theta}\mathbf{G}\right)\mathbf{x}_m, \forall k,\forall m, \\
&~~~~~~~~\bm{\Theta}=\mathrm{diag}\{\bm{\theta}\}, \left|\theta_n\right|=1, \forall n.
\end{align}
\end{subequations}
Using the definitions in (\ref{eq:a b r}), the reflecting design problem is more compactly formulated as
\begin{subequations}
\begin{align}
\label{eq:trans SINR obj}
\underset{\bm{\theta}}{\min}~~ &\underset{m,k}{\max}~~~\left|\mathfrak{I}\left\{\widehat{r}_{m,k}\right\}\right|
-\mathfrak{R}\left\{\widehat{r}_{m,k}\right\}\tan\Phi \\
\label{eq:trans SINR theta}
  &~\mathrm{s.t.}~~~~\left|\theta_n\right|=1, \forall n.
\end{align}
\end{subequations}
As before, (\ref{eq:trans SINR obj}) is non-differentiable due to the max and absolute value functions and (\ref{eq:trans SINR theta}) is non-convex, which leads us to exploit the RCG algorithm.
To facilitate the RCG algorithm, the same idea used to solve (\ref{eq:objective min sum}) is employed here in three steps: \textit{i}) replacing the absolute value function, \textit{ii}) smoothing the max function, \textit{iii}) calculating its Euclidean gradient.
We briefly describe these three steps below.

The absolute value function is replaced using (\ref{eq:absolute}), and the problem is further rearranged as
\begin{subequations}
\begin{align}
\underset{\widetilde{\bm{\Theta}}}{\min}~~&\underset{i}{\max}~~~f_i \\
&~\mathrm{s.t.}~~~[\widetilde{\bm{\Theta}}(:,n)]^T\widetilde{\bm{\Theta}}(:,n) = 1, \forall n,
\end{align}
\end{subequations}
where $f_i$ is defined in (\ref{eq: f2i}) with the auxiliary vectors $\mathbf{a}_{2i-1}$,  $\mathbf{b}_{2i-1}$, $\mathbf{a}_{2i}$ $\mathbf{b}_{2i}$ in (\ref{eq:a 2i-1}), (\ref{eq:b 2i-1}), (\ref{eq:a 2i}), (\ref{eq:b 2i}), and
\begin{subequations}
\begin{align}
c_{2i-1} &\triangleq \mathfrak{I}\{a_{m,k}\}-\mathfrak{R}\{a_{m,k}\}\tan\Phi,\\
c_{2i} &\triangleq -\mathfrak{I}\{a_{m,k}\}-\mathfrak{R}\{a_{m,k}\}\tan\Phi.
\end{align}
\end{subequations}

We smooth the max function by exploiting the log-sum-exp algorithm, which introduces the approximation
\begin{equation}
\label{eq:SINR g}
\begin{aligned}
&\max\{f_1,f_2, \ldots, f_{2\Omega^K}\} \approx  g(\widetilde{\bm{\Theta}})\\
&~~~~\triangleq \varepsilon \log \left\{\sum_{i=1}^{K\Omega^K}\left[\exp\left(\frac{f_{2i-1}}{\varepsilon}\right)+\exp\left(\frac{f_{2i}}{\varepsilon}\right)\right]\right\},
\end{aligned}
\end{equation}
where $\varepsilon$ is a small positive number.

After obtaining the smooth and differentiable $g(\widetilde{\bm{\Theta}})$, its Euclidean gradient can be derived by substituting (\ref{eq: gradient e}) into (\ref{eq:chain rule}), where we need to calculate
\begin{subequations}
\begin{align}
\centering
&\frac{\partial g}{\partial\mathfrak{R}\{\bm{\theta}^H\}} = \frac{\sum_{i=1}^{K \Omega^K}\left[\exp(f_{2i-1}/\varepsilon) \mathbf{a}^T_{2i-1}+\exp(f_{2i}/\varepsilon) \mathbf{a}^T_{2i}\right]}{\sum_{i=1}^{K\Omega^K}\left[\exp(f_{2i-1}/\varepsilon)+\exp(f_{2i+1}/\varepsilon)\right]}, \\
&\frac{\partial g}{\partial\mathfrak{I}\{\bm{\theta}^H\}} = \frac{\sum_{i=1}^{K \Omega^K}\left[\exp(f_{2i-1}/\varepsilon) \mathbf{b}^T_{2i-1}+\exp(f_{2i}/\varepsilon) \mathbf{b}^T_{2i}\right]}{\sum_{i=1}^{K\Omega^K}\left[\exp(f_{2i-1}/\varepsilon)+\exp(f_{2i+1}/\varepsilon)\right]}.
\end{align}
\end{subequations}
Then, the RCG-based reflecting design in Algorithm \ref{alg:RCG} can be applied to solve the QoS balancing problem.
The optimal IRS phase shifts $\bm{\theta}^*$ and low-resolution phase shifts $\bm{\theta}_\mathrm{d}$ have the same format as in (\ref{eq:theta o}) and (\ref{eq:theta B}).

Finally, the joint symbol-level precoding and reflecting design for the QoS balancing problem is straightforward.
With an initial reflecting value $\bm{\theta}_0$, the symbol-level precoding matrix $\mathbf{X}$ and the IRS phase shifts $\bm{\theta}$ are iteratively updated by solving (\ref{eq:SINR balancing precoding}) and (\ref{eq:SINR balancing theta}) until convergence is found.

\section{Initialization and Complexity Analysis}
\label{sec:initialization}
\vspace{0.2 cm}
\subsection{Initialization}

Since the RCG algorithm in general will find a locally optimal solution, an initial value that is close to the optimal solution can provide better performance and accelerate convergence.
In this subsection, we propose a heuristic method to obtain the initial $\bm{\theta}_0$.

Both the power minimization and QoS balancing problems depend on the quality of the users' channels, which can be manipulated by the IRS.
Therefore, without considering the precoding, we can simply design the IRS phase shifts to maximize the minimum channel gain for all users:\begin{subequations}\label{eq:initial theta}\begin{align}\label{eq:initial obj}
\underset{\bm{\theta}_0}{\max}~~&\underset{k}{\min}~~~\left\|\mathbf{h}^H_k + \mathbf{h}^H_{\mathrm{r}k}\bm{\Theta}_0\mathbf{G}\right\|^2 \\
&~\mathrm{s.t.}~~~~\bm{\Theta}_0=\mathrm{diag}\{\bm{\theta}_0\}, \left|\theta_n\right|=1, \forall n.
\end{align}
\end{subequations}

For the algorithm development, we rewrite the objective function (\ref{eq:initial obj}) in a more concise format as \begin{equation}
\begin{aligned}
f_k(\bm{\theta}_0) &\triangleq \left\|\mathbf{h}^H_k + \mathbf{h}^H_{\mathrm{r}k}\bm{\Theta}_0\mathbf{G}\right\|^2 = \left\|\mathbf{h}^H_k + \bm{\theta}_0^H\mathbf{G}_k\right\|^2 \\
& = \bm{\theta}_0^H\mathbf{G}_k\mathbf{G}_k^H\bm{\theta}_0 + \bm{\theta}_0^H\mathbf{G}_k\mathbf{h}_k + \mathbf{h}^H_k\mathbf{G}_k^H\bm{\theta}_0 + \mathbf{h}^H_k\mathbf{h}_k,
\end{aligned}
\end{equation}
where $\mathbf{G}_k \triangleq \mathrm{diag}\left\{\mathbf{h}^H_{\mathrm{r}k}\right\}\mathbf{G}, \forall k$, for simplicity.
Then, using the log-sum-exp approximation, problem (\ref{eq:initial theta}) is reformulated as
\begin{subequations}
\label{eq:initial theta 2}
\begin{align}
&\underset{\bm{\theta}_0}{\min}~~~g(\bm{\theta}_0) \triangleq \varepsilon\log\sum_{k=1}^K\exp\left[\frac{-f_k(\bm{\theta}_0)}{\varepsilon}\right]\\
\label{eq:umc}
&~\mathrm{s.t.}~~~\left|\theta_n\right|=1, \forall n,
\end{align}
\end{subequations}
where the unit modulus constraint (\ref{eq:umc}) forms an $N$-dimensional complex circle manifold
\be \mathcal{M} = \left\{\bm{\theta}_0 \in \mathbb{C}^N: \left|\theta_n\right| = 1, \forall n\right\},\ee with the tangent space
\be
T_{\bm{\theta}_0}\mathcal{M} = \left\{\mathbf{p} \in \mathbb{C}^N: \mathfrak{R}\left\{\mathbf{p}\circ \bm{\theta}_0^*\right\} = \mathbf{0}_N\right\}.
\ee
The Riemannian gradient of $g(\bm{\theta}_0)$ is thus given by
\be
\mathrm{grad}_{\bm{\theta}_0}g = \bigtriangledown_{\bm{\theta}_0}g - \mathfrak{R}\left\{\bigtriangledown_{\bm{\theta}_0}g\circ \bm{\theta}_0^*\right\}\circ \bm{\theta}_0,
\ee
where the Euclidean gradient $\bigtriangledown_{\bm{\theta}_0}g$ can be calculated as
\be
\bigtriangledown_{\bm{\theta}_0}g = \frac{\sum_{k=1}^K\left\{\exp\left[\frac{-f_k(\bm{\theta}_0)}{\varepsilon}\right]
\left(-2\mathbf{G}_k\mathbf{G}_k^H\bm{\theta}_0-2\mathbf{G}_k\mathbf{h}_k\right)\right\}}
{\sum_{k=1}^K\exp\left[\frac{-f_k(\bm{\theta}_0)}{\varepsilon}\right]}.
\ee
Then, following the same procedure as in Algorithm 1, the initialization $\bm{\theta}_0$ can be easily obtained.

\subsection{Complexity Analysis}

In this subsection, we provide a brief complexity analysis for the proposed joint symbol-level precoding and reflecting design algorithms.
The complexity to obtain the initial $\bm{\theta}_0$ is at most $\mathcal{O}\{N^{1.5}\}$ using the RCG algorithm.
For the power minimization problem, the complexity to solve for the precoder $\mathbf{x}_m$ by the gradient projection algorithm is $\mathcal{O}\{M^3\}$, and the worst-case computation for the RCG algorithm is of order $\mathcal{O}\{(2N)^{1.5}\}$.
Therefore, the total computational complexity of Algorithm \ref{alg:2} is $\mathcal{O}\{\Omega^KM^3+(2N)^{1.5}\}$.
The symbol-level precoding algorithm for the QoS balancing problem is derived by solving the corresponding power minimization problem, and the reflecting designs for these two problems are similar.
Thus, the complexity to solve the QoS balancing problem is the same as the power minimization problem.

For comparison, the complexity of the joint linear block-level precoding and reflecting design in \cite{Wu TWC 2019} should also be analyzed.
In \cite{Wu TWC 2019}, the linear block-level precoding design is a second-order cone program (SOCP) problem with a complexity of order $\mathcal{O}\left\{M^{4.5}K^{3.5}\right\}$, and the reflecting design is solved using a semidefinite relaxation (SDR) algorithm, whose complexity is of order $\mathcal{O}\left\{K^{3.5}N^{2.5}+K^{2.5}N^{3.5}\right\}$.
Therefore, the total complexity of the proposed algorithm in [10] is of order
$\mathcal{O}\left\{M^{4.5}K^{3.5}+K^{3.5}N^{2.5}+K^{2.5}N^{3.5}\right\}$. We can observe that the algorithm in \cite{Wu TWC 2019} has polynomial complexity in the number of transmit antennas, reflecting elements and users, while the proposed algorithm has exponential complexity in the number of users and the base is the order of the modulation.
However, the significant performance gains obtained by the proposed approach make it worth considering despite the resulting complexity; in cases where the number of users or the modulation order are not very large, the required computational load is still manageable.

\begin{figure}[!t]
  \centering
  \includegraphics[width = 0.45\textwidth]{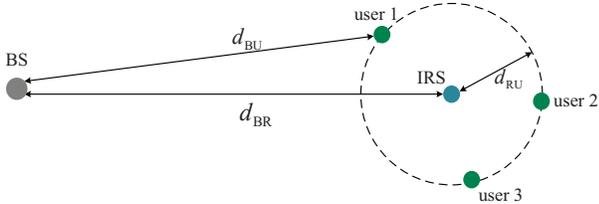}
  \vspace{-0.2 cm}
  \caption{Simulation setup for multiuser case.}\label{fig:set1}
  \vspace{-0.2 cm}
\end{figure}
\begin{figure}[!t]
  \centering
  \includegraphics[width = 0.45\textwidth]{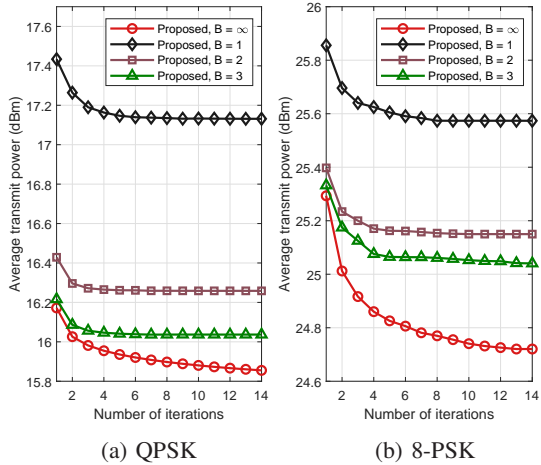}
  \vspace{-0.0 cm}

  \small{(a) QPSK   \;\;\;\;\;\; \;\;\;\;\;\; \;\;\;\;\;\;\;\;\;\;  (b) 8-PSK}
  \caption{Average transmit power versus the number of iterations ($K=3$ users, $N=64$ reflecting elements, $M=6$ transmit antennas, $\Gamma = 10$ dB).}\label{fig:delta_iter}
  \vspace{-0.2 cm}
\end{figure}

\section{Simulation Results}
\label{sec:simulation}
\vspace{0.2 cm}

In this section, we provide extensive simulation results to illustrate the effectiveness of our proposed algorithms.
For simplicity, we assume either QPSK ($\Omega = 4$) or 8-PSK modulation ($\Omega = 8$) are used.
The QoS requirement and the noise power for $K = 3$ users is the same, i.e., $\Gamma = \Gamma_k, \forall k, \sigma^2 = \sigma_k^2 = -80$dBm, $\forall k$.
The transmit antenna array at the BS is assumed to be a uniform linear array with antenna spacing given by $\lambda/2$.
The distance-dependent path loss is modeled as $\mathrm{PL}(d) = C_0\left(\frac{d}{d_0}\right)^{-\alpha}$, where $C_0 = -30$dB is the path loss for the reference distance $d_0 = 1$m, $d$ is the link distance, and $\alpha$ is the path-loss exponent.
In addition, the small-scale Rician fading channel model for all channels is assumed, which consists of line-of-sight (LoS) and non-LoS (NLoS) components.
The channels from the BS to the IRS can be expressed as
\begin{equation}
\mathbf{G} = \sqrt{\frac{\kappa}{\kappa+1}}\mathbf{G}^{\mathrm{LoS}}+
\sqrt{\frac{1}{\kappa+1}}\mathbf{G}^{\mathrm{NLoS}},
\end{equation}
where $\kappa$ is the Rician factor set as 3dB, $\mathbf{G}^{\mathrm{LoS}}$ is the LoS component which depends on the geometric settings, and $\mathbf{G}^{\mathrm{NLoS}}$ is the NLoS Rayleigh fading component.
The MISO channels $\mathbf{h}_k$ and $\mathbf{h}_{\mathrm{r}k}, k = 1,\ldots, K$, are assumed to obey a similar model, consisting of LoS and NLoS components.

The geometry of the following simulations is shown in Fig. \ref{fig:set1}, from a top-down view.
The IRS is typically deployed close to the users to provide them with performance gains, since they may suffer from blockage and severe attenuation.
Therefore, we set the distance between the BS and the IRS as $d_{\mathrm{BR}} = 50$m, the distance between the IRS and the users as $d_{\mathrm{RU}} = 3$m, and the distance between the BS and each user as $d_{\mathrm{BU}}$, which lies in the interval $[d_{\mathrm{BR}}-d_{\mathrm{RU}},d_{\mathrm{BR}}+d_{\mathrm{RU}}]$.
The users are randomly distributed on the dashed circle in Fig. \ref{fig:set1}.
Considering the link distance, the path-loss exponents for $\mathbf{h}_k$, $\mathbf{h}_{\mathrm{r}k}$ and $\mathbf{G}$ are set as 3.5, 2.8, and 2.5, respectively.
Similar settings are widely adopted in existing works, e.g., \cite{Wu TWC 2019}.

\subsection{Power Minimization Problem}

\begin{figure}[!t]
\centering
\subfigure[QPSK]{
\begin{minipage}{0.45\textwidth}
\centering
\includegraphics[width = \textwidth]{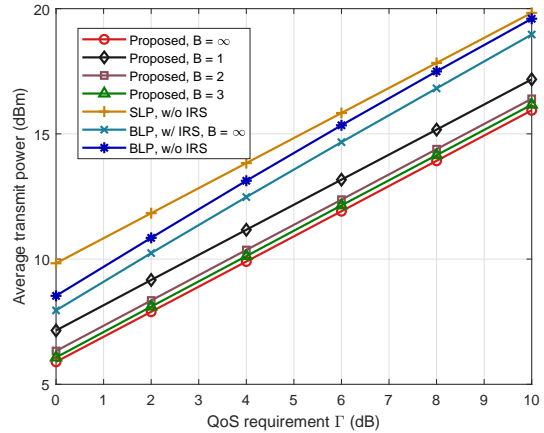}
\vspace{-0.3 cm}
\label{fig:power_SNR4}
\end{minipage}
}
\subfigure[8-PSK]{
\begin{minipage}{0.45\textwidth}
\centering
\includegraphics[width = \textwidth]{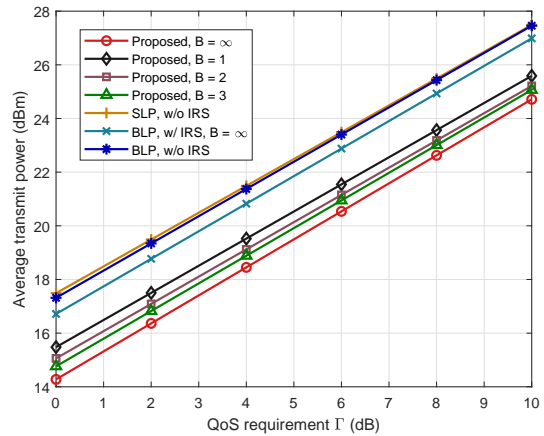}
\vspace{-0.3 cm}
\label{fig:power_SNR8}
\end{minipage}
}
\caption{Average transmit power versus QoS requirement $\Gamma$ ($K=3$ users, $N=64$ reflecting elements, $M=6$ transmit antennas).}
\label{fig:power_SNR}
\vspace{-0.2 cm}
\end{figure}

\begin{figure}
\centering
  \includegraphics[width = 0.45\textwidth]{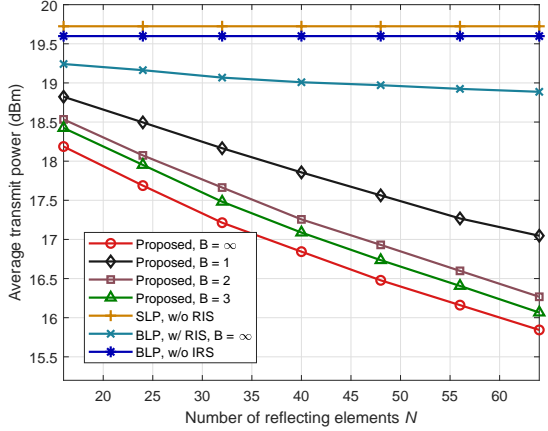}
  \vspace{-0.2 cm}
  \caption{Average transmit power versus the number of reflecting elements $N$ ($K=3$ users, $M=6$ transmit antennas, $\Gamma = 10$ dB).}
  \label{fig:power_N} \vspace{-0.2 cm}
\end{figure}
In this subsection, we illustrate the simulation results for the power minimization problem.
We first show in Fig. \ref{fig:delta_iter} the convergence of our proposed algorithm for the cases where the IRS has continuous, 1-bit, 2-bit, and 3-bit phase shifters, i.e., $B = \infty, 1, 2, 3$, respectively.
It can be observed that our proposed algorithm converges within 14 iterations for all schemes and the low-resolution cases have much faster convergence. We also see that the QPSK case converges more quickly than the 8-PSK case and achieves a lower transmit power, but there is a relatively large gap between the continuous and low-resolution cases.
These convergence results are encouraging for a low-complexity implementation.

\begin{figure}[!t]
  \centering
  \includegraphics[width = 0.45\textwidth]{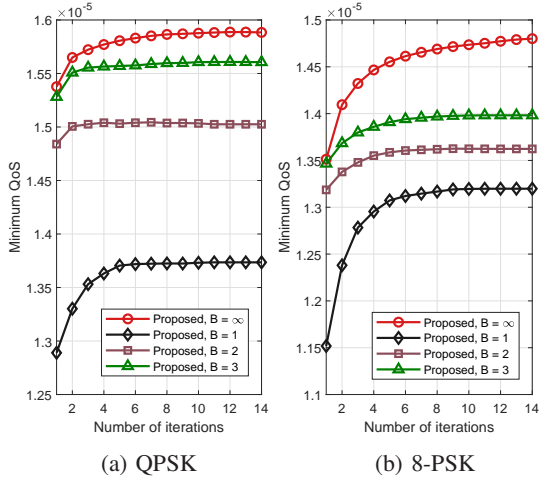}
  \vspace{-0.0 cm}

  \small{(a) QPSK   \;\;\;\;\;\; \;\;\;\;\;\; \;\;\;\;\;\;\;\;\;\;  (b) 8-PSK}
  \caption{Minimum QoS versus the number of iterations ($K=3$ users, $N=64$ reflecting elements, $M=6$ transmit antennas, $P = 20$ dBm).}\label{fig:cost_iter}
  \vspace{-0.2 cm}
\end{figure}

In Fig. \ref{fig:power_SNR}, we show the average transmit power versus the QoS requirement $\Gamma$.
In order to demonstrate the effectiveness of our proposed joint symbol-level precoding and reflecting design, we also include: \textit{i}) Symbol-level precoding without the aid of the IRS (denoted as ``SLP, w/o IRS''); \textit{ii}) linear block-level precoding with the aid of IRS and continuous phase shifters \cite{Wu TWC 2019} (denoted as ``BLP, w/ IRS, $B = \infty$''); \textit{iii}) linear block-level precoding without the aid of the IRS \cite{Wiesel TSP 2006} (denoted as ``BLP, w/o IRS'').
It can be seen from Fig. \ref{fig:power_SNR} that our proposed scheme requires less transmit power than the ``SLP, w/o IRS'' approach in both the QPSK and 8-PSK cases, which validates the effectiveness of using IRS in the symbol-level precoding systems.
We can also observe that the proposed joint symbol-level precoding and reflecting algorithm outperforms the ``BLP, w/ IRS, $B = \infty$'' and ``BLP, w/o IRS'' approaches, which verifies the performance improvement due to symbol-level precoding.
In addition, it is noted that with increasing $B$, better system performance can be achieved.
Moreover, the 3-bit quantized solution can provide performance similar to the ideal unquantized solution, thus providing a favorable trade-off between cost and performance.
Beyond $B=3$ bits, the extra cost and complexity associated with using higher-resolution IRS are not warranted given the very marginal increase in system performance.
On the other hand, the performance gap between the different approaches in Fig. \ref{fig:power_SNR8} is relatively smaller than that in Fig. \ref{fig:power_SNR4}, which indicates that higher-order modulation can compensate for the performance loss due to low-resolution phase shifts or non-optimal precoders.
However, significantly more transmit power is required to support higher-order modulation.

\begin{figure}[!t]
\centering
\subfigure[QPSK]{
\begin{minipage}{0.45\textwidth}
\centering
\includegraphics[width = \textwidth]{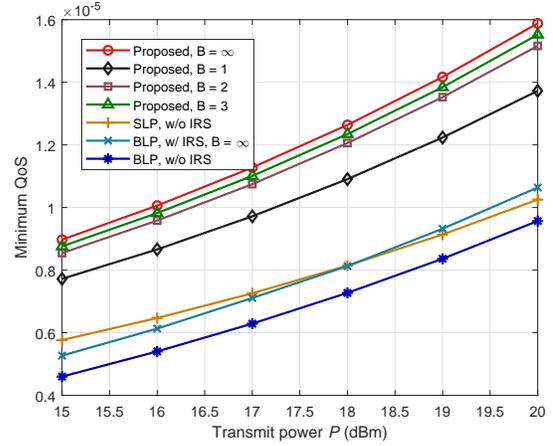}
\vspace{-0.2 cm}
\label{fig:cost_P4}
\end{minipage}
}
\subfigure[8-PSK]{
\begin{minipage}{0.45\textwidth}
\centering
\includegraphics[width = \textwidth]{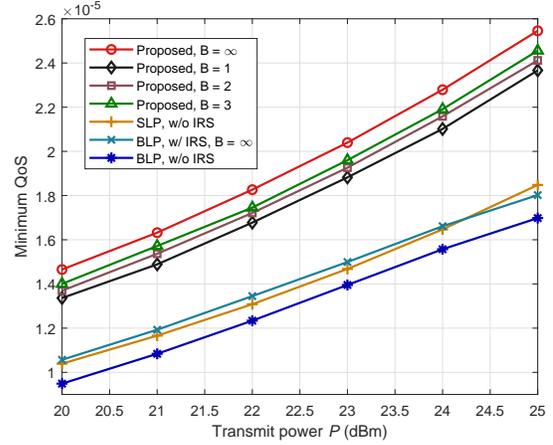}
\vspace{-0.2 cm}
\label{fig:cost_P8}
\end{minipage}
}
\caption{Minimum QoS versus the transmit power $P$ ($K=3$ users, $N=64$ reflecting elements, $M=6$ transmit antennas).}
\label{fig:cost_P}
\vspace{-0.2 cm}
\end{figure}

Next, we focus on QPSK modulation and present the average transmit power versus the number of reflecting elements $N$ in Fig. \ref{fig:power_N}.
The same relationship can be observed as in Fig. \ref{fig:power_SNR}.
We observe that as the number of reflecting elements increases, the average transmit power is greatly reduced, and the reduction is more pronounced for our proposed SLP algorithms compared with linear block-level precoding.
This supports the main idea of our paper, that the combination of SLP and IRS provides symbiotic benefits.

\subsection{QoS Balancing Problem}

In this subsection, we present simulations for the QoS balancing problem.
The convergence performance is similar to that observed for the power minimization problem in Fig. \ref{fig:cost_iter}.
It is seen that all schemes converges within 14 iterations, which indicates favorable computational complexity.

\begin{figure}[]
\centering
\subfigure[QPSK]{
\begin{minipage}{0.45\textwidth}
\centering
\includegraphics[width = \textwidth]{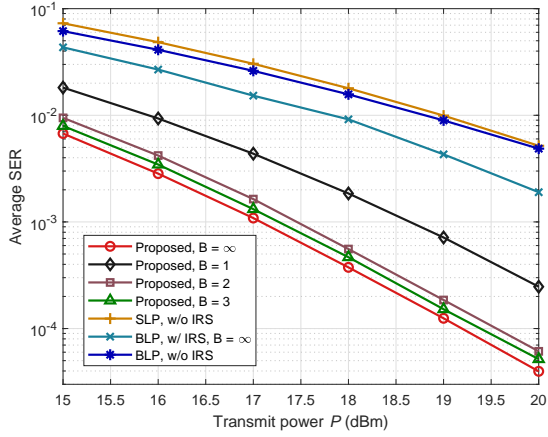}
\vspace{-0.3 cm}
\label{fig:SER_P4}
\end{minipage}
}
\subfigure[8-PSK]{
\begin{minipage}{0.45\textwidth}
\centering
\includegraphics[width = \textwidth]{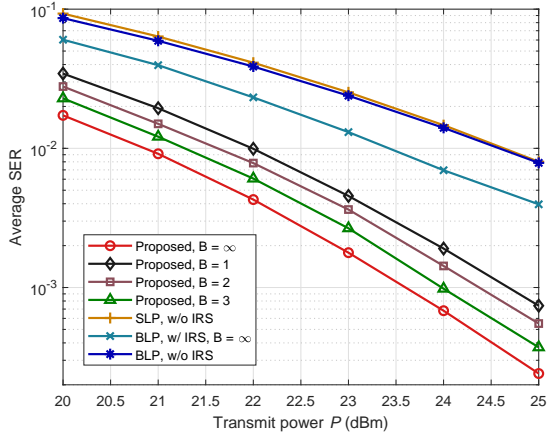}
\vspace{-0.3 cm}
\label{fig:SER_P8}
\end{minipage}
}
\caption{Average SER versus the transmit power $P$ ($K=3$ users, $N=64$ reflecting elements, $M=6$ transmit antennas).}
\label{fig:SER_P}
\vspace{-0.6 cm}
\end{figure}

\begin{figure}[]
  \centering
  \includegraphics[width = 0.45\textwidth]{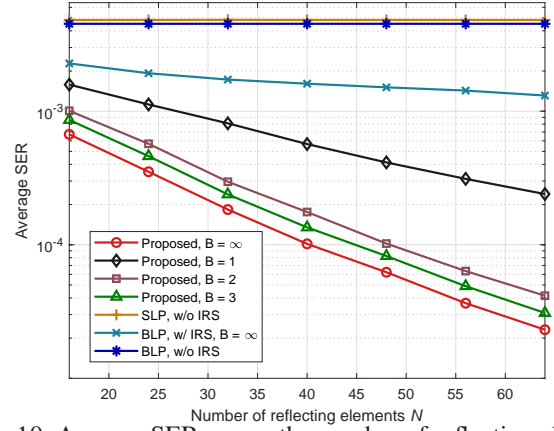}
  \vspace{-0.3 cm}
  \caption{Average SER versus the number of reflecting elements $N$ ($K=3$ users, $N=64$ reflecting elements, $M=6$ transmit antennas, $P = 20$ dBm).}
   \label{fig:SER_N}
  \vspace{-0.2 cm}
\end{figure}

In Fig. \ref{fig:cost_P}, we plot the worst-case performance $t$ achieved for the QoS balancing problem versus the transmit power $P$.
As the transmit power increases, the minimum QoS of all methods increases, which means that the distance between the received noise-free signal and its decision boundaries becomes larger.
The minimum QoS achieved by our proposed joint symbol-level precoding and reflecting algorithm is dramatically larger than the other competitors, which further supports the benefit of using IRS together with symbol-level precoding.

In order to demonstrate the QoS improvement in a more intuitive and natural way with a familiar metric, in Fig. \ref{fig:SER_P} we present the average SER versus the transmit power.
Obviously, the larger QoS requirement (i.e., $\Gamma$), which results in a larger distance between the received signal and its decision boundaries, leads to better performance in terms of a lower SER.
This relationship can be verified by comparing Figs. \ref{fig:cost_P} and \ref{fig:SER_P}.
More importantly, the improvement in the SER performance of our proposed algorithm is also very remarkable.
When the 3-bit IRS can offer close to $10^{-4}$ SER, the symbol-level precoding system without IRS provides only $10^{-2}$ SER.
Therefore, utilizing the QoS requirement $\Gamma$ as the performance metric for optimizing the IRS-enhanced symbol-level precoding systems is reasonable and effective.

Next, we show the average SER versus the number of reflecting elements $N$ for QPSK modulation in Fig. \ref{fig:SER_N}.
Since the larger IRS can provide larger beamforming/reflecting gains, we observe that the average SER decreases for all schemes with increasing $N$.
Moreover, our proposed schemes always achieve significantly better SER performance for different IRS sizes.

\subsection{Impact of IRS Location}

\begin{figure}[!t]
  \centering
  \includegraphics[width = 0.45\textwidth]{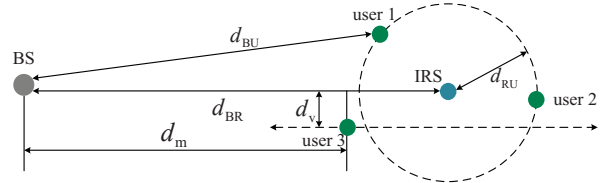}
  \vspace{-0.3 cm}
  \caption{Simulation setup for mobile user case.}\label{fig:set2}
  \vspace{-0.2 cm}
\end{figure}

\begin{figure}[!t]
  \centering
  \includegraphics[width = 0.45\textwidth]{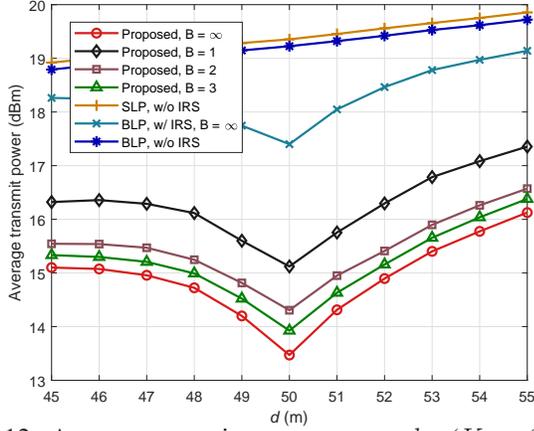}
  \vspace{-0.3 cm}
  \caption{Average transmit power versus $d_\mathrm{m}$ ($K=3$ users, $N=64$ reflecting elements, $M=6$ transmit antennas, SNR $ = 10$ dB).}
   \label{fig:p_d}
  \vspace{-0.2 cm}
\end{figure}

In order to demonstrate the impact of IRS location, we focus on QPSK modulation and simulate a case where the position of one of the users changes along a horizontal line parallel to the line between the BS and IRS.
As shown in Fig. \ref{fig:set2}, user 3 moves along the dashed line and the vertical distance between it and the BS-IRS link is $d_\mathrm{v} = 0.5$m.
Let $d_\mathrm{m}$ be the horizontal distance between the BS and user 3.
The other two users are still located 3m from the IRS.
In Figs. \ref{fig:p_d} and \ref{fig:cost_d}, we show the system performance as a function of $d_\mathrm{m}$.
We observe that the proposed schemes always outperform other benchmarks, and the best performance is achieved when the user moves closest to the IRS, i.e., $d_\mathrm{m} = 50$m, since a larger reflection gain is obtained when the user is closest to the IRS.
Moreover, when the users move closer together, MUI may become stronger, which can be effectively exploited by symbol-level precoding.

\begin{figure}[!t]
  \centering
  \includegraphics[width = 0.45\textwidth]{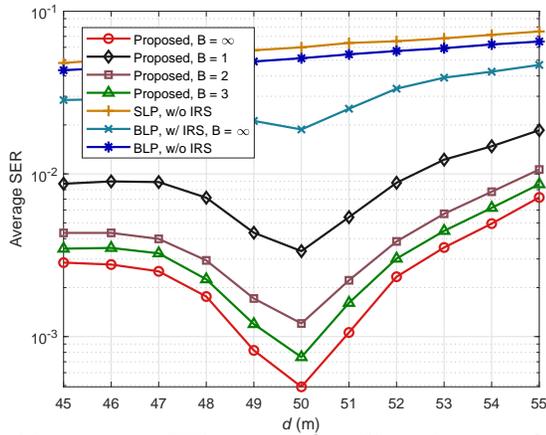}
  \vspace{-0.3 cm}
  \caption{Average SER versus $d_\mathrm{m}$ ($K=3$ users, $N=64$ reflecting elements, $M=6$ transmit antennas, $ P= 15$ dBm).}
   \label{fig:cost_d}
  \vspace{-0.2 cm}
\end{figure}

\section{Conclusions}
\label{sec:conclusion}

In this paper, we investigated IRS-enhanced wireless networks, where an IRS is deployed to assist the multi-user MISO communication system, which employs symbol-level precoding to exploit the multi-user interference.
In particular, we considered the joint symbol-level precoding and reflecting design problems for IRS-enhanced MU-MISO systems.
Efficient iterative algorithms were proposed to solve the power minimization and QoS balancing problems.
The gradient-projection-based and Riemannian conjugate gradient (RCG)-based algorithms were used to design the symbol-level precoding and IRS phase shifts, respectively.
The simulation results illustrated that our proposed algorithms exhibit remarkably better performance in terms of power-savings and SER-reductions.
These positive results have confirmed that the employment of IRS in symbol-level precoding systems can provide more efficient multi-user interference exploitation by intelligently manipulating the multi-user channels.

\begin{appendices}
\section{}
\begin{IEEEproof}[Proof of Proposition 1]
We assume the optimal solution of the QoS balancing problem (\ref{eq:SINR balancing precoding}) is $\mathbf{x}_m^*, \forall m$, and $t^*$, and we denote the transmit power allocated to the $m$-th precoder as $P_m \triangleq \left\|\mathbf{x}_m^*\right\|^2$.
If the power allocations $P_m, \forall m$, are known, the QoS balancing problem can be divided into $\Omega^K$ sub-problems and the $m$-th sub-problem is written as
\begin{subequations}
\label{eq:proof 1}
\begin{align}
  &\underset{\mathbf{x}_m,t_m}{\max}~~~t_m \\
  &\mathrm{s.t.}~~~~~\mathfrak{R}\{\widetilde{\mathbf{h}}^H_k
  \mathbf{x}_me^{-j\angle{s_{m,k}}}\}\tan \Phi \\ &~~~~~~~~~~~~~~~~~~~~~~~~~-\left|\mathfrak{I}\{\widetilde{\mathbf{h}}^H_k\mathbf{x}_me^{-j\angle{s_{m,k}}}\}\right| \geq t_m, \forall k,  \non\\
  &~~~~~~~~\left\|\mathbf{x}_m\right\|^2 \leq P_m. \label{eq:proof b}
\end{align}
\end{subequations}
It can be easily verified that the inequality constraint (\ref{eq:proof b}) holds with equality at the optimal $\mathbf{x}_m^{*}$. Moreover, $\mathbf{x}_m^{*}$ is also the optimal solution for the following power minimization problem:
\begin{subequations}
\label{eq:proof 2}
\begin{align}
  &\underset{\mathbf{x}_m}{\min}~~~\| \mathbf{x}_m \|^2 \label{eq:proof 2a} \\
  &\mathrm{s.t.}~~~~\mathfrak{R}\{\widetilde{\mathbf{h}}^H_k
  \mathbf{x}_me^{-j\angle{s_{m,k}}}\}\tan \Phi\label{eq:proof 2b}\\
  &~~~~~~~~~~~~~~~~~~~~~~~~~-\left|\mathfrak{I}\{\widetilde{\mathbf{h}}^H_k\mathbf{x}_me^{-j\angle{s_{m,k}}}\}\right| \geq t_m^*, \forall k.\non
\end{align}
\end{subequations}
To prove this statement by contradiction, we start by assuming that $\mathbf{x}_m^{*}$ is not optimal for (\ref{eq:proof 2}) and there exists another $\overline{\mathbf{x}}_m$ satisfying (\ref{eq:proof 2b}) and requiring less power, i.e., $\| \overline{\mathbf{x}}_m\| < P_m$. Then, we can scale up $\overline{\mathbf{x}}_m$ to let the power constraint (\ref{eq:proof b}) become equal and provide a higher $t_m$ in (\ref{eq:proof 1}), in which case $\mathbf{x}_m^{*}$ is not optimal any more, and this results in a contradiction.

If we use an arbitrary $t_0 > 0 $ as the QoS requirement in (\ref{eq:proof 2b}) instead of $t_m^*$, the optimal solution $\mathbf{x}_m^{\star}$ of this power minimization problem is the scaled version of $\mathbf{x}_m^{*}$, since the constraint (\ref{eq:proof 2b}) is a linear function. More specifically, $\mathbf{x}_m^* = \frac{\sqrt{P_m}\mathbf{x}_m^\star}{\left\|\mathbf{x}_m^\star\right\|}$ with the optimal power $P_m$.

If we set the QoS requirement as $t_0 = \sigma_k\sqrt{\Gamma_k}\tan\Phi$, then (\ref{eq:proof 2}) has the same format as (\ref{eq:xm}), which implies that the optimal $\mathbf{x}_m^\star$ for the power minimization problem (\ref{eq:xm}) is also a scaled version of the optimal $\mathbf{x}_m^*$ for the QoS balancing problem (\ref{eq:SINR balancing precoding}).
Proposition 1 is therefore proved.

\end{IEEEproof}
\end{appendices}


\begin{thebibliography}{99}

\bibitem{Liu WCNC 20} R. Liu, H. Li, M. Li, and Q. Liu, ``Symbol-level precoding design for IRS-assisted MU-MISO systems,'' in \textit{Proc. IEEE Wireless Commun. and Networking Conf. (WCNC)}, Seoul, Korea, May 2020, pp. 1-6.

\bibitem{Zhang CST 2017} S. Zhang, Q. Wu, S. Xu, and G. Y. Li, ``Fundamental green tradeoffs: Progress, challenges, and impacts on 5G networks,'' \textit{IEEE Commun. Surveys and Tutorials}, vol. 19, no. 1, pp. 33-56, First Quarter 2017.
\bibitem{Lee CM 2014} A. L. Swindlehurst, E. Ayanoglu, P. Heydari, and F. Capolino, ``Millimeter-wave massive MIMO: The next wireless revolution?'' \textit{IEEE Commun. Mag.}, vol. 52, no. 9, pp. 56-62, Sep. 2014.


\bibitem{Liaskos CM 2018} C. Liaskos, S. Nie, A. Tsioliaridou, A. Pitsillides, S. Ioannidis, and I. Akyildiz, ``A new wireless communication paradigm through software-controlled metasurfaces,'' \textit{IEEE Commun. Mag.}, vol. 56, no. 9, pp. 162-169, Sep. 2018.
\bibitem{Renzo 2019} M. D. Renzo, \textit{et al.}, ``Smart radio environments empowered by AI reconfigurable meta-surfaces: An idea whose time has come,'' Mar. 2019. [Online]. Available: https://arXiv.org/abs/1903.08925
\bibitem{Basar Access 2019} E. Basar, M. D. Renzo, J. d. Rosny, M. Debbah, M.-S. Alouini, and R. Zhang, ``Wireless communications through reconfigurable intelligent surfaces,'' \textit{IEEE Access}, vol. 7, pp. 116753-116773, Aug. 2019.


\bibitem{Gong 19} S. Gong, X. Lu, D. T. Hoang, D. Niyato, L. Shu, D. I. Kim, and Y.-C. Liang, ``Towards smart wireless communications via intelligent reflecting surfaces: A contemporary survey,'' \textit{IEEE Commun. Surveys and Tutorials}, to appear, DOI: 10.1109/COMST.2020.3004197.
\bibitem{Wu CM 2020}  Q. Wu and R. Zhang, ``Towards smart and reconfigurable environment: Intelligent reflecting surface aided wireless network,'' \textit{IEEE Commun. Mag.}, vol. 58, no. 1, pp. 106-112, Jan. 2020.



\bibitem{Zhao 2019} J. Zhao, ``Optimizations with intelligent reflecting surfaces (IRSs) in 6G wireless networks: Power control, quality of service, max-min fair beamforming for unicast, broadcast, and multicast with multi-antenna mobile users and multiple IRSs,'' Aug. 2019. [Online]. Available: https://arXiv.org/abs/1908.03965

\bibitem{Wu TWC 2019} Q. Wu and R. Zhang, ``Intelligent reflecting surface enhanced wireless network via joint active and passive beamforming,'' \textit{IEEE Trans. Wireless Commun.}, vol. 18, no. 11, pp. 5394-5409, Nov. 2019.



\bibitem{Yu ICCC 2019} X. Yu, D. Xu, and R. Schober, ``MISO wireless communication systems via intelligent reflecting surfaces,'' in \textit{Proc. IEEE/CIC Int. Conf. Commun. in China (ICCC)}, Changchun, China, Aug. 2019, pp. 1-6.
\bibitem{Zappone 2020} A. Zappone, M. D. Renzo, F. Shams, X. Qian, and M. Debbah, ``Overhead-aware design of reconfigurable intelligent surfaces in smart radio environments,'' \textit{IEEE Trans. Wireless Commun.}, to appear, DOI: 10.1109/TWC.2020.3023578.


\bibitem{Wu WCL 2019} Q. Wu and R. Zhang, ``Weighted sum power maximization for intelligent reflecting surface aided SWIPT,'' \textit{ IEEE Wireless Commun. Lett.}, vol. 9, no. 5, pp. 586-590, Dec. 2019.


\bibitem{Han TVT 2018} Y. Han, W. Tang, S. Jin, C.-K. Wen, and X. Ma, ``Large intelligent surface-assisted wireless communication statistical CSI,'' \textit{IEEE Trans. Veh. Technol.}, vol. 68, no. 8, pp. 8238-8242, Aug. 2018.

\bibitem{Ye 2019} J. Ye, S. Guo, and M.-S. Alouini, ``Joint reflecting and precoding designs for SER minimization in reconfigurable intelligent surfaces assisted MIMO systems,'' \textit{IEEE Trans. Wireless Commun.}, vol. 19, no. 8, pp. 5561-5574, Aug. 2020.

\bibitem{Perovic 2019} N. S. Perovi\'{c}, M. D. Renzo, and M. F. Flanagan, ``Channel capacity optimization using reconfigurable intelligent surfaces in indoor mmWave environments,'' in \textit{Proc. IEEE Int. Conf. Commun. (ICC)}, Dublin, Ireland, Jun. 2020, pp. 1-6.
\bibitem{Huang 2018 ICASSP} C. Huang, A. Zappone, M. Debbah, and C. Yuen, ``Achievable rate maximization by passive intelligent mirrors,'' in \textit{Proc. IEEE Int. Conf. on Acoustics, Speech, and Signal Process. (ICASSP)}, Calgary, Canada, Apr. 2018, pp. 1-5.




\bibitem{p1} C. Pan, H. Ren, K. Wang, W. Xu, M. Elkashlan, A. Nallanathan, and L. Hanzo, ``Multicell MIMO communications relying on intelligent reflecting surfaces,'' \textit{IEEE Trans. Wireless Commun.}, vol. 19, no. 8, pp. 5218-5233, May 2020.


\bibitem{Li WCNC 2020} H. Li, R. Liu, M. Li, and Q. Liu, ``IRS-enhanced wideband MU-MISO-OFDM communication systems,'' in \textit{Proc. IEEE Wireless Commun. and Netwroking Conf. (WCNC)}, Seoul, Korea, May 2020, pp. 1-6.

\bibitem{Huang TWC 2019} C. Huang, A. Zappone, G. C. Alexandropoulos, M. Debbah, and C. Yuen, ``Reconfigurable intelligent surfaces for energy efficiency in wireless communication,'' \textit{IEEE Trans. Wireless Commun.}, vol. 18, no.8, pp. 4157-4170, Aug. 2019.





\bibitem{Cui WCL 2019} M. Cui, G. Zhang, and R. Zhang, ``Secure wireless communication via intelligent reflecting surface,'' \textit{IEEE Wireless Commun. Lett.}, vol. 8, no. 5, pp. 1410-1414, Oct. 2019.

\bibitem{016} H. Shen, W. Xu, S. Gong, Z. He, and C. Zhao, ``Secrecy rate maximization for intelligent reflecting surface assisted multi-antenna communications,'' \textit{IEEE Commun. Lett.}, vol. 23, no. 9, pp. 1488-1492, Sep. 2019.

\bibitem{Yu GLOBECOM 2019} X. Yu, D. Xu, and R. Schober, ``Enabling secure wireless communication via intelligent reflecting surfaces,'' in \textit{Proc. IEEE Global Commun. Conf. (GLOBECOM)}, Waikoloa, USA, Dec. 2019, pp. 1-6.


\bibitem{Bjornson WCL 2020} E. Bj\"{o}rnson, \"{O}. \"{O}zdogan, and E. G. Larsson, ``Intelligent reflecting surface versus decode-and-forward: How large surfaces are needed to beat relaying?'' \textit{IEEE Wireless Commun. Lett.}, vol. 9, no. 2, pp. 244-248, Feb. 2020.
\bibitem{Zhang TVT 2020} H. Zhang, B. Di, L. Song, and Z. Han, ``Reconfigurable intelligent surfaces assisted communications with limited phase shifts: How many phase shifts are enough?'' \textit{IEEE Trans. Veh. Technol.}, vol. 69, no. 4, pp. 4498-4502, Feb. 2020.


\bibitem{Wu ITC 20} Q. Wu and R. Zhang, ``Beamforming optimization for wireless network aided by intelligent reflecting surface with discrete phase shifts,'' \textit{IEEE Trans. Commun.}, vol. 68, no. 3, pp. 1838-1851, Dec. 2019.
\bibitem{Abeywickrama 2020} S. Abeywickrama, R. Zhang, Q. Wu, and C. Yuen, ``Intelligent reflecting surface: Practical phase shift model and beamforming optimization,'' \textit{IEEE Trans. Commun.}, vol. 68, no. 9, pp. 5849-5863, Sep. 2020.
\bibitem{Cai 2020} W. Cai, H. Li, M. Li, and Q. Liu, ``Practical modeling and beamforming for intelligent reflecting surface aided wideband systems,'' \textit{IEEE Commun. Lett.}, vol. 24, no. 7, pp. 1568-1571, Apr. 2020.

\bibitem{Di TVT 2020} B. Di, H. Zhang, L. Li, L. Song, Y. Li, and Z. Han, ``Practical hybrid beamforming with limited-resolution phase shifters for reconfigurable intelligent surface based multi-user communications,'' \textit{IEEE Trans. Veh. Technol.}, vol. 69, no. 4, pp. 4565-4570, Feb. 2020.

\bibitem{Badiu WCL 2020} M.-A. Badiu and J. P. Coon, ``Communication through a large reflecting surface with phase errors,'' \textit{IEEE Wireless Commun. Lett.}, vol. 9, no. 2, pp. 184-188, Feb. 2020.


\bibitem{Basar ITC 2020} E. Basar, ``Reconfigurable intelligent surface-based index modulation: A new beyond MIMO paradigm for 6G,'' \textit{IEEE Trans. Commun.}, vol. 68, no. 5, pp. 3187-3196, Feb. 2020.



\bibitem{Yan JSAC 2020} W. Yan, X. Yuan, Z.-Q. He, and X. Kuai, ``Passive beamforming and information transfer design for reconfigurable intelligent surfaces aided multiuser MIMO systems,'' \textit{IEEE J. Sel. Areas Commun.}, vol. 38, no. 8, pp. 1793-1808, Jun. 2020.

\bibitem{Liu WCSP 2019} R. Liu, H. Li, M. Li, and Q. Liu, ``Symbol-level precoding design for intelligent reflecting surface assisted multi-user MIMO systems,'' in \textit{Proc. Int. Conf. on Wireless Commun. and Signal Process.}, Xian, China, Oct. 2019, pp. 1-6.
\bibitem{Huang 2019} C. Huang, S. Hu, G. C. Alexandropoulos, A. Zappone, C. Yuen, R. Zhang, M. D. Renzo, and M. Debbah, ``Holographic MIMO surfaces for 6G wireless networks: Opportunities, challenges, and trends,'' \textit{IEEE Wireless Commun.}, to appear, DOI: 10.1109/MWC.001.1900534.
\bibitem{Yang 2020} K. Yang, Y. Shi, Y. Zhou, Z. Yang, L. Fu, and W. Chen, ``Federated machine learning for intelligent IoT via reconfigurable intelligent surface,'' \textit{IEEE Network}, vol. 34, no. 5, pp. 16-22, Sep. 2020.
\bibitem{Huang 2020} C. Huang, R. Mo, and C. Yuen, ``Reconfigurable intelligent surface assisted multiuser MISO systems exploiting deep reinforcement learning,'' \textit{IEEE J. Sel. Areas Commun.}, vol. 38, no. 8, pp. 1839-1850, Aug. 2020.





\bibitem{Chen 2019} J. Chen, Y.-C. Liang, H. V. Cheng, and W. Yu, ``Channel estimation for reconfigurable intelligent surface aided multi-user MIMO systems,'' Dec. 2019. [Online]. Available: https://arXiv.org/abs/1912.03619
\bibitem{Hu 2019} C. Hu and L. Dai, ``Two-timescale channel estimation for reconfigurable intelligent surface aided wireless communications,'' Dec. 2019. [Online]. Available: https://arXiv.org/abs/1912.07990
\bibitem{You 2019} C. You, B. Zheng, and R. Zhang, ``Channel estimation and passive beamforming for intelligent reflecting surface: Discrete phase shift and progressive refinement,'' \textit{IEEE J. Sel. Areas Commun.}, to appear, DOI: 10.1109/JSAC.2020.3007056.
\bibitem{Wei 2020} L. Wei, C. Huang, G. C. Alexandropoulos, and C. Yuen, ``Parallel factor decomposition channel estimation in RIS-assisted multi-user MISO communication,'' in \textit{Proc. IEEE 11th Sensor Array and Multichannel Signal Process. Workshop (SAM)}, Hangzhou, China, Jun. 2020, pp. 1-5.

\bibitem{MA ITSP 2015} M. Alodeh, S. Chatzinotas, and B. Ottersten, ``Constructive multiuser interference in symbol level precoding for the MISO downlink channel,'' \textit{IEEE Trans. Signal Process.}, vol. 63, no. 9, pp. 2239-2252, May 2015.
\bibitem{CM ITSP 2015} C. Masouros and G. Zheng, ``Exploiting known interference as green signal power for downlink beamforming optimization,'' \textit{IEEE Trans. Signal Process.}, vol. 63, no. 14, pp. 3628-3640, Jul. 2015.
\bibitem{MA ICST 2018} M. Alodeh, \textit{et al.}, ``Symbol-level and multicast precoding for multiuser multiantenna downlink: A state-of-art, classification, and challenges,'' \textit{IEEE Commun. Surveys and Tutorials}, vol. 20, no. 3, pp. 1733-1757, May 2018.
\bibitem{AL ITWC 2018} A. Li and C. Masouros, ``Interference exploitation precoding made practical: Optimal closed-form solutions for PSK modulations,'' \textit{IEEE Trans. Wireless Commun.}, vol. 17, no. 11, pp. 7661-7676, Nov. 2018.
\bibitem{Li ICST 2020} A. Li, \textit{et al.}, ``A tutorial on interference exploitation via symbol-level precoding: Overview, state-of-the-art and future directions,'' \textit{IEEE Commun. Surveys and Tutorials}, vol. 22, no. 2, pp. 796-839, Mar. 2020.




\bibitem{cvx} A. Ben-Tal and A. Nemirovski, \textit{Lectures on Modern Convex Optimization: Analysis, Algorithms, and Engineering Applications}, Philadelphia, USA: Society for Industrial and Applied Mathematics, 2001.
\bibitem{Boumal manopt 14} N. Boumal, B. Mishra, P.-A. Absil, and R. Sepulchre, ``Manopt, a MATLAB toolbox for optimization on manifolds,'' \textit{The J. Mach. Learn. Res.}, vol. 15, no. 1, pp. 1455-1459, 2014.
\bibitem{CG} J. R. Shewchuk, ``An introduction to the conjugate gradient method without the agonizing pain,'' 1994. [Online]. Available: \url{http://www.cs.cmu.edu/%7Equake-papers/painless-conjugate-gradient.pdf}


\bibitem{Wiesel TSP 2006} A. Wiesel, Y. C. Eldar, and S. Shamai, ``Linear precoding via conic optimization for fixed MIMO receivers,'' \textit{IEEE Trans. Signal Process.}, vol. 54, no. 1, pp. 161-176, Jan. 2006.
\end{thebibliography}
\end{document}